\newcommand{\squishlist}{
   \begin{list}{$\bullet$}
    {
      \setlength{\itemsep}{0pt}
      \setlength{\parsep}{3pt}
      \setlength{\topsep}{3pt}
      \setlength{\partopsep}{0pt}
      \setlength{\leftmargin}{1.5em}
      \setlength{\labelwidth}{1em}
      \setlength{\labelsep}{0.5em} } }
\newcommand{\squishend}{
    \end{list}  }
\newcommand{\eat}[1]{}
\newcommand{\aset}{\mathcal{S}}
\newcommand{\discbudget}{{\sc DiscBudget}\xspace}
\newcommand{\tradeoff}{{\sc TradeOff}\xspace}
\newcommand{\sparsi}{{SPARSI}\xspace}
\newtheorem{definition}{Definition}
\newtheorem{example}{Example}
\newtheorem{theorem}{Theorem}
\begin{document}

\title{On Sharing Private Data with Multiple Non-Colluding Adversaries}
\numberofauthors{3}
\author{
            \alignauthor Theodoros Rekatsinas\\
            \affaddr{University of Maryland} 
                \email{thodrek@cs.umd.edu}
            \alignauthor Amol Deshpande\\
            \affaddr{University of Maryland} 
                \email{amol@cs.umd.edu}
            \alignauthor Ashwin Machanavajjhala \\
            \affaddr{Duke University} 
                \email{ashwin@cs.duke.edu}
}
\maketitle

\begin{abstract}
We present \sparsi, a novel theoretical framework for partitioning sensitive data across multiple non-colluding adversaries. Most work in privacy-aware data sharing has considered disclosing summaries where the aggregate information about the data is preserved, but sensitive user information is protected. Nonetheless, there are applications, including online advertising, cloud computing and crowdsourcing markets, where detailed and fine-grained user-data must be disclosed. We consider a new data sharing paradigm and introduce the problem of {\em privacy-aware data partitioning}, where a sensitive dataset must be partitioned among {\em k} untrusted parties ({\em adversaries}). The goal is to maximize the utility derived by partitioning and distributing the dataset, while minimizing the total amount of sensitive information disclosed. The data should be distributed so that an adversary, without colluding with other adversaries, cannot draw additional inferences about the private information, by {\em linking} together multiple pieces of information released to her.  The assumption of no collusion is both reasonable and necessary in the above application domains that require release of private user information. \sparsi enables us to formally define privacy-aware data partitioning using the notion of  {\em sensitive properties} for modeling private information and a {\em hypergraph} representation for describing the interdependencies between data entries and private information. We show that solving privacy-aware partitioning is, in general, NP-hard, but for specific information disclosure functions, good approximate solutions can be found using {\em relaxation} techniques. Finally, we present a local search algorithm applicable to generic information disclosure functions. We apply \sparsi together with the proposed algorithms on data from a real advertising scenario and show that we can partition data with no disclosure to any single advertiser.

\end{abstract}




\section{Introduction}
\label{sec:intro}
The landscape of online services has changed significantly in the recent years. More and more sensitive information is released on the Web and is processed by online services.  The most common paradigm to consider are people who rely on online social networks to communicate and share information with each other. This leads to a diverse collection of voluntarily published user data. Online services such as Web search, news portals, recommendation and e-commerce systems,  collect and store this data in their effort to provide high-quality personalized experiences to a heterogeneous user base. Naturally, this leads to increased concerns related to an individual's privacy and the possibility of private personal information being aggregated by untrusted third-parties such as advertisers.

A different application domain that is increasingly popular is crowdsourcing markets. Tasks, typically decomposed into micro-tasks, are submitted by users to a crowdsourcing market and are fulfilled by a collection of workers. The user needs to provide each worker with the necessary data to accomplish each micro-task. However, this data may contain information that is sensitive and care must be taken not to disclose any more sensitive information than minimally needed to accomplish the task. Consider, for example, the task of labeling a dataset that contains information about the location of different individuals, that needs to be used as input to a machine learning algorithm.  Since the cost of hand-labeling the dataset is high, submitting this task to a crowdsourcing market provides an inexpensive alternative. However,  the dataset might contain sensitive information about the trajectories the individuals follow as well as the structure of the social network they form. Hence, we must perform a clever partitioning of the dataset to the different untrusted workers in order to avoid disclosing sensitive information. Observe, that under this paradigm, the sensitive information contained in the dataset is not necessarily associated with a particular data entry.

Similarly with the rise of cloud computing, increasing volumes of private data are being stored and processed on untrusted servers in the cloud. Even if the data is stored and processed in an encrypted form, an adversary may be able to infer some of the private information by aggregating, over a period of time, the information that is available to it  (e.g., password hashes of users, workload information). This has led security researchers to recommend splitting data and workloads across systems or organizations to remove such points of compromise.

In all applications presented above, a party, called {\em publisher}, is required to distribute a collection of data (e.g., user information) to many different third parties. The utility in sharing data results either from the improved quality of personalized services or from the cost reduction in fulfilling a decomposable task. The sensitive information is often not limited to the identity of a particular entity in the dataset (e.g., a user using a social network based service), but rather arises from the combination of a set of data items. It is these sets we would like to partition accross different adversaries. We next use two real-world examples to illustrate this.
\newpage
\begin{example}
\label{ex:socialnetwork}
Consider a location based social network, such as Gowalla\footnote{\url{http://en.wikipedia.org/wiki/Gowalla}} and Brightkite\footnote{\url{http://en.wikipedia.org/wiki/Brightkite}}, where users check-in at different places they visit. The available data contains information about the locations of the users at different time instances and the structure of the social network connecting the users. User location data is of particular interest to advertisers, as analyzing it can provide them with a rather detailed profile of the habits of the user. Using such data allows advertisers to devise highly efficient personalized marketing strategies. Hence, they are willing to pay large amounts of money to the data publisher for user information. However, analyzing the location of multiple users collectively can reveal information about the friendship links between users, thus, revealing the structure of the social network \cite{cho:2011}. Disclosing the structure of the social network might not be desirable by the online social network provider, as it can be used for viral marketing purposes, which may drive users away from using the social network. It is easy to see that a natural tradeoff exists between publishing user data, and receiving high monetary utility, versus keeping this data private to ensure the popularity of the social network.
\end{example}

This example shows how an adversary may infer some sensitive information that is not explicitly mentioned in the dataset but is related to the provided data and can be inferred only when particular entries of the dataset are collectively analyzed. Not revealing all those entries together to an adversary prevents disclosure of the sensitive information. We further exemplify this setup using a crowdsourcing application.
\begin{example}
\label{ex:crowdsourcing}
Consider a data publisher with a collection of medical prescriptions to be transcribed.  Each prescription contains sensitive information, such as the disease of the patient, the prescribed medication, the identity of the patient, and the identity of the doctor. Furthermore, the publisher would like to minimize the total cost of the transcription. Thus, she considers using a crowdsourcing solution where she partitions the task into micro-tasks to be submitted to multiple workers. It is obvious that if all fields in the prescription are revealed to the same worker, highly sensitive information is disclosed. However, if the dataset is partitioned in such a way that different workers are responsible for transcribing different fields of one prescription, no information is disclosed as patients cannot be linked with a particular disease or a particular doctor. In this case, the utility of the publisher stems from fulfilling the task at a reduced cost.
\end{example}

Despite being simplistic, the second example illustrates how distributing a dataset can allow one to use it for a particular task, while minimizing the disclosure of sensitive information. Motivated by applications such as the ones presented above, we introduce the problem of \emph{privacy-aware partitioning} of a dataset, where our goal is to partition a dataset among $k$ untrusted parties and to maximize either user's utility, or the third parties' utilities, or a combination of those. Further, we would like to do this while minimizing the total amount of sensitive information disclosed.

Most of the previous work has either considered sharing \emph{pri-vacy-preserving summaries} of data, where the aggregate information about the population of users is preserved, or has bypassed the use of personal data and its disclosure to multiple advertisers~\cite{sweeney02:kAnon, ashwin06:ldiversity, hardt10:simple}. These approaches focus on worst-case scenarios assuming arbitrary collusion among adversaries. Therefore, all adversaries are combined and treated as a single adversary.  However, this strong threat model does not allow publishing of fine-grained information. Other approaches have explicitly focused on online advertising, and have developed specialized systems that limit the disclosure of sensitive user-related information when deployed to a user's Web browser~\cite{guha:2011,toubiana:2010}. Finally, Krause et al. have studied how the disclosure of a subset of the attributes of a data entry can allow access to fine-grained information~\cite{krause:2008}. While they examine the utility and disclosure tradeoff, their proposed framework does not take into account the interdependencies across different data entries and assumes a single adversary (third party).\footnote{For the remainder of the paper we treat adversaries and third parties as the same}

In this work we propose \sparsi a new framework that allows us to formally reason about leakage of sensitive information in scenarios such as the ones presented above, namely, setups where we are given a dataset to be partitioned among a set of non-colluding adversaries in order to obtain some utility. We consider a generalized form of utility that captures both the utility that each adversary obtains by receiving part of the data and the user's personal utility derived by fulfilling a task. We elaborate more on this generalization in the next section. This raises a natural tradeoff between maximizing the overall utility while minimizing information disclosure. We provide a formal definition of the privacy-aware data partitioning problem, as an optimization of the aforementioned tradeoff. 

While non-collusion results in a somewhat weaker threat model, we argue that it is a reasonable and practical assumption in a variety of scenarios, including the ones discussed above. In setups like online advertising and cloud computing there is no particular incentive for adversaries to collude, due to conflicting monetary interests. In crowdsourcing scenarios the probability that adversaries who may collude will be assigned to the same task is minuscule due to the large number of anonymous available workers. Attempts to collude can often be detected easily, and the possibility of strict penalization (by the crowdsourcing market) provides additional disincentive to collude. Finally, we note that, an assumption of no collusion is a necessary and a practical one in most of these situations; otherwise there would be no way to accomplish those tasks.

The main contributions of this paper are as follows:
\squishlist
 \item We introduce the problem of {\em privacy-aware data partitioning} across multiple adversaries, and analyze its complexity. To our knowledge this is the first work that addresses the problem of minimizing information leakage when partitioning a dataset across multiple adversaries. 
 \item We introduce \sparsi, a rigorous framework based on the notion of {\em sensitive properties} that allows us to formally reason about how information is leaked and the total amount of information disclosure. We represent the interdependencies between data and sensitive properties using a {\em hypergraph} and we show how the problem of privacy-aware partitioning can be cast as an optimization problem that it is NP-hard by reducing it to hypergraph partitioning. 
 \item We analyze the problem for specific families of information disclosure functions, including step and linear functions, and show how good solutions can be derived by using {\em relaxation} techniques. Furthermore, we propose a set of algorithms, based on a generic {\em greedy randomized local search} algorithm, for obtaining approximate solutions to this problem under generic families of utility and information disclosure functions. 
 \item Finally, we demonstrate how, using SPARSI, one can distribute user-location data, like in Example \ref{ex:socialnetwork}, to multiple advertisers while ensuring that almost no sensitive information about potential user friendship links is revealed. Moreover, we experimentally verify the performance of the proposed algorithms for both synthetic and real-world datasets. We compare the performance of the proposed greedy local search algorithm against approaches tailored to specific disclosure functions, and show that it is capable of producing solutions that are close to the optimal. 
\squishend

\section{SPARSI Framework}
\label{sec:prelim}
In this section we start by describing the different components of \sparsi. More precisely, we show how one can formally reason about the sensitive information contained in a dataset by introducing the notion of {\em sensitive properties}. Then, we show how to model the interdependencies between data entries and sensitive properties rigorously, and how to reason about the leakage of sensitive information in a principled manner.

\subsection{Data Entries and Sensitive Information}
\label{sec:dependencygraph}
Let $D$ denote the dataset to be partitioned among different adversaries. Moreover, let $A$ denote the set of adversaries. 
We assume that $D$ is comprised of data entries $d_i \in D$ that disclose minimal sensitive information if revealed alone. To clarify this consider Example \ref{ex:socialnetwork} where each data entry is the check-in location of a user. The user is sharing this information voluntarily with the social network service in exchange for local advertisement services, hence, this entry is assumed not to disclose sensitive information. In Example \ref{ex:crowdsourcing}, the data entries to be published are the fields of the prescriptions. Observe that if the disease field is revealed in isolation, no information is leaked about possible individuals carrying it.

However, revealing several data entries together discloses sensitive information. We define a {\em sensitive property} to be a property that is related to a subset of data entries but not explicitly represented in the data set, and that can be inferred if the data entries are collectively analyzed. Let $P$ denote the set of sensitive properties that are related to data entries in $D$. To formalize this abstract notion of indirect information disclosure, we assume that each sensitive property
$p \in P$ is associated with a variable (either numerical or categorical) $V_p$ with true value $v^*_p$. Let $D_p \subset D$ be the smallest set of data entries from which an adversary can infer the true value $v^*_p$ of $V_p$ with high confidence, if all entries in $D_p$ are revealed to her. We assume that there is a unique such $D_p$ corresponding to each property $p$. We say that data entries in $D_p$ disclose information about property $p \in P$ and that information disclosure can be modeled as a function over $D_p$ (see Section \ref{sec:infodisclosure}).

We assume that sensitive properties are specified by an expert and the dependencies between data entries in $D$ and properties in $P$, via sets $D_p,~\forall p \in P$, are represented as an undirected bipartite graph, called a {\em dependency graph}. Returning to the example applications presented above we have the following: In Example \ref{ex:socialnetwork} the sensitive properties correspond to the friendship links between users, and the assosiated datasets $D_p$ correspond to the check-in information of the pairs of users participating in friendship links. In Example \ref{ex:crowdsourcing}, the sensitive properties correspond to the links between a patient's id and a particular disease, or a doctor's id and particular medication. In general, it has been shown that data mining techniques can be used to determine the dependencies between data items and sensitive properties~\cite{li08:icde}.

Let $\mathcal{G}_d$ denote such a dependency graph. $\mathcal{G}_d$ has two types of nodes, i.e., nodes $P$ that correspond to sensitive properties and nodes $D$ that correspond to data entries. An edge connects a data entry $d \in D$ with a property $p \in P$ only if $d$ can potentially disclose information about $p$. Alternatively, we can use an equivalent {\em hypergraph} representation, that is easier to reason about in some cases. Converting the dependency graph $\mathcal{G}_d$ into an equivalent {\em dependency hypergraph} is simply done by mapping each property node into a hyperedge. We assume that the dimension (i.e., size of largest hyperedge) of the dependency hypergraph is bigger than the number of adversaries.

\begin{figure}
\begin{center}
\includegraphics[scale=0.3]{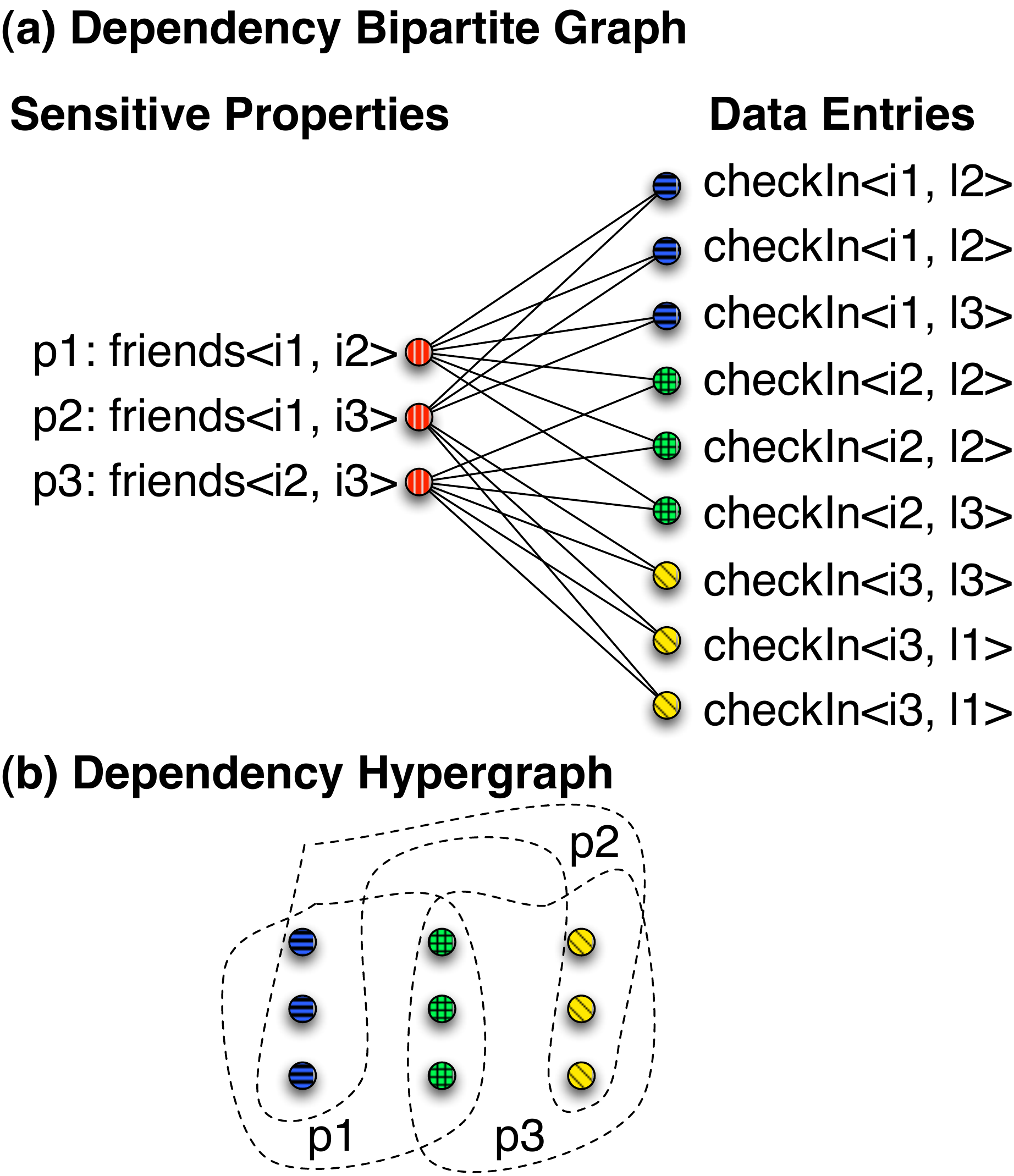}
\caption{An example of a dependency graph between data entries and sensitive properties. Data entries corresponding to the same user are colored using the same color.}
\label{fig:prop_example}
\end{center}
\end{figure}

An example of a bipartite graph and its equivalent hypergraph is shown in Figure \ref{fig:prop_example}. Recall that in this example we do not want to disclose any information about the structure of the social network, i.e., the sensitive properties are the friendship links between individuals. However, if an adversary is given the check-in locations of two individuals, she can infer whether there is a friendship link or not between them \cite{cho:2011}. The dependencies between  check-ins and friendship links are captured by the edges in the bipartite graph.

\subsection{Information Disclosure}
\label{sec:infodisclosure}
We model the information disclosed to an adversary $a$ using a vector valued function $f_a: \mathcal{P}(D) \rightarrow [0,1]^{|P|}$, which takes as input the subset of data entries published to an adversary, and returns a vector of disclosure values; one per sensitive property. That is , $f_a(S_a)[i]$ denotes the information disclosued to adversary $a \in A$ about the $i$th property when $a$ has access to the subset $S_a$ of data entries. We assume that information disclosure takes values in $[0,1]$, with $0$ indicating no disclosure and $1$ indicating full disclosure. Generic disclosure functions, including posterior beliefs, and distribution distances can be naturally represented by \sparsi. The only requirement is that the disclosure function returns a value for each sensitive property. 

Based on the disclosure functions of all adversaries we define the overall disclosure function $f$ as an aggregate  of all functions in $F$. Before presenting the formal definition, we define the {\em assignment set}, given as input to $f$.
\begin{definition}[Assignment Set]
\label{def:aset}
Let $x_{da}$ be an indicator variable set to 1 if data entry $d \in D$ is published to adversary $a \in A$. We define the assignment set $\aset$ to be the set of all variables $x_{da}$, i.e., $\aset = \{x_{11}, \cdots, x_{1|A|}, \cdots, x_{|D||A|} \}$, and the adversary's assignment set $\aset_a$ to be the set of indicator variables corresponding to adversary $a \in A$, i.e.,  $\aset_a = \{x_{1a}, x_{2a},\cdots, x_{|D|a}\}$.
\end{definition}
\newpage
\noindent\textbf{Worst Disclosure.} The overall disclosure can be expressed as:
\begin{equation}
\label{eq:worst_case_disclosure}
f_{\infty}(\aset) = max_{a \in A}( \| f_a(\aset_a) \|_{\infty} )
\end{equation}
Observe that using the infinity norm accounts for the worst case disclosure across properties. Thus, full disclosure for at least one sensitive property suffices to maximize the information leakage. This function is indifferent to the total number of sensitive properties that are fully disclosed in a particular partitioning and gives the same score to all that have at least one fully disclosed property.

However, there are cases where one is not interested in the worst case disclosure but only interested in the total information disclosed to any adversary. Following this observation we introduce another variation of the overall disclosure function that considers the total information disclosure per adversary.

\noindent\textbf{Average Disclosure.} We replace the infinity norm in the equation above with the $L_1$ norm:
 \begin{equation}
 \label{eq:total_disclosure}
  f_{L_1}(\aset) = max_{a \in A}( \frac{\| f_a(\aset_a) \|_{1}}{|P|} )
\end{equation}

Observe that both Equation \ref{eq:worst_case_disclosure} and Equation \ref{eq:total_disclosure} consider the maximum over the disclosure across adversaries, i.e., they can be written as:
\begin{equation}
\label{eq:unified_disclosure}
f(\aset) = max_{a \in A}f ^{\prime}_{a}(\aset_a)
\end{equation}
where $f^{\prime}_a(\aset_a) =  \|f_a(\aset_a) \|_{\infty} $ or $f^{\prime}_a(\aset_a) =   \frac{\| f_a(\aset_a) \|_{1}}{|P|} $.

\subsection{Overall Utility}
\label{sec:utility}
Let $u$ denote the utility derived by partitioning the dataset across multiple adversaries. We have that $u: \mathcal{P}(D \times A) \rightarrow \mathbb{R}^+$, where $\mathcal{P}(D \times A)$ denotes the powerset of possible data-to-adversary assignments. As we show below, this function either quantifies the utility of the adversaries when acquiring part of the dataset $D$ (see Example \ref{ex:socialnetwork}) or the publisher's utility derived by fulfilling a particular task that requires partitioning the data (see Example \ref{ex:crowdsourcing}). Under many real world examples these two different kinds of utility can be unified under a single utility. Consider Example \ref{ex:socialnetwork}. Typically, advertisers pay higher amounts for data that provide higher utility. Thus, maximizing the utility of each individual advertiser maximizes the utility (maybe monetary) of the data publisher as well. 

Based on this observation we unify the two types of utilities under a single formulation based on the utility of adversaries. First we focus on the utility of adversaries. Intuitively, we would expect that the more data an adversary receives, the less the observation of a new, previously unobserved, data entry would increase the gain of the adversaries. This notion of {\em diminishing returns} is formalized by the combinatorial notion of {\em submodularity} and is shown to hold in many real-world scenarios~\cite{marshall:1890,krause:2008}. More formally, a set function $G: 2^V \rightarrow \mathbb{R}$ mapping subsets $A \subseteq V$ into the real numbers is called {\em submodular}~\cite{doerr:2010}, if for all $A \subseteq B \subseteq V$, and $v^{\prime} \in V \setminus B$, it holds that $G(A \cup \{v^{\prime}\}) - G(A) \geq G(B \cup \{v^{\prime}\}) - G(B)$, i.e., adding $v^{\prime}$ to a set $A$ increases $G$ more than adding $v^{\prime}$ to a superset $B$ of $A$. $F$ is called {\em nondecreasing}, if for all $A \subseteq B \subseteq V$ it holds that $G(A) \leq G(B)$.

Let $u_a$ be a set function that quantifies the utility of each adversary $a$. As mentioned above, we assume that $u_a$ is a nondecreasing submodular function. For convenience we will occasionally drop the nondecreasing qualification in the remainder of the paper. Let $U_A$ denote the set of all utility functions for a given set of adversaries $A$. The overall utility function $u$ can be defined as an aggregate function of all utilities $u_a \in U_A$. We require that $u$ is also a submodular function. For example the overall utility may be defined as a linear combination, i.e., a weighted sum, of all functions in $U_A$, following the form:
\begin{equation}
\label{eq:adv_based_util}
u(\aset) = \sum\nolimits_{a \in A} w_a u_a(\aset_a)
\end{equation}
where $\aset$ and $\aset_a$ are defined in Definition \ref{def:aset}.
Since all functions in $U_A$ are submodular, $u$ will also be submodular, since it is expressed as a linear combination of submodular functions~\cite{fujishige:2005}.

An example of a submodular function $u_a$ is an additive function. Assume that each data entry in $d \in D$ has some utility $w_{da}$ for an adversary $a \in A$. We have that $u_a(\aset_a) = \sum\nolimits_{d \in D}w_{da}x_{da}$, where $x_{da}$ is an indicator variable that takes value 1 when data entry $d$ is revealed to adversary $a$ and 0 otherwise. For the remainder of the paper we will assume that utility $u$ is normalize so that $u \in [0,1]$.



\section{Privacy-Aware Data Partitioning}
\label{sec:problem}
In this section, we describe two formulations of the {\em privacy-aware partitioning} problem. We show how both can be expressed as maximization problems that are, in general, NP-hard to solve. We consider a dataset $D$ that needs to be partitioned across a given set of adversaries $A$. We assume that each data entry must be revealed to at least one and at most $t$ adversaries. The lower bound arises naturally in both application discussed in the introduction section. The upper bound is necessary to model cases where the number of assignments per data entry needs to be restricted as it might incur some cost, e.g., monetary in crowdsourcing applications.

We assume that the functions to compute the overall utility and information disclosure are given. Let these functions be denoted by $u$ and $f$ respectively. Ideally, we wish to maximize the utility while minimizing the cost; however, there is a natural tradeoff between the two optimization goals. A traditional approach is to set a requirement on information disclosure while optimizing the gain. Accordingly we can define the following optimization problem.

\begin{definition}[DiscBudget]
Let $D$ be a set of data entries, $A$ be a set of adversaries, and $\tau_{I}$ be a budget on information disclosure.
This formulation of the privacy-aware partitioning problem finds a data entry to adversary assignment set $\aset$ that maximizes $u(\aset)$ under constraint $f(\aset) < \tau_{I}$. More formally we have the following optimization problem:
\begin{equation}
\label{eq:sec_part_max_utility}
\begin{aligned}
& \underset{\aset \in \mathcal{P}(D \times A)}{\text{maximize}}
& & u(\aset) \\
& \text{subject to}
&& f(\aset) < \tau_{I}, \\
&&&1 \leq \sum\nolimits_{a=1}^k x_{da} \leq t, \forall d \in D, \\
&&&x_{da} \in \{0,1\}.
\end{aligned}
\end{equation}
where $x_{da}$ and $\aset$ are defined in Definition \ref{def:aset} as before and $t \geq 1$ is the maximum number of adversaries to whom a particular data entry can be published.
\end{definition}
This optimization problem already captures our desire to reduce the information disclosure while increasing the utility. However, depending on the value of $\tau_{I}$, the optimization problem presented above might be infeasible. Infeasibility occurs when $\tau_{I}$ is so small that no assignment of data to adversaries, such that $\sum\nolimits_{a=1}^k x_{da} \geq 1,~\forall d \in D$ and $f(\aset) < \tau_{I}$ exists.

To overcome this, we consider a different formulation of the privacy-aware data partitioning problem where we seek to maximize the difference between the utility and the information disclosure functions. We consider the Lagrangian relaxation of the previous optimization problem. Again, we assume that both functions are measured using the same unit. We have the following:
\begin{definition}[Tradeoff]
Let $D$ be a set of data entries, $A$ be a set of adversaries, and $\tau_{I}$ be a budget on information disclosure.
This formulation of the privacy-aware partitioning problem finds a data entry to adversary assignment $\aset$ that maximizes the tradeoff between the overall utility and the overall information disclosure, i.e., $u(\aset) + \lambda(\tau_{I}- f(\aset))$, where $\lambda$ is a nonnegative weight. More formally we have the following optimization problem:
\begin{equation}
\label{eq:sec_part_tradeoff}
\begin{aligned}
& \underset{\aset \in \mathcal{P}(D\times|A|)}{\text{maximize}}
&& u(\aset) +\lambda(\tau_{I} - f(\aset)) \\
& \text{subject to}
&& 1 \leq \sum\nolimits_{a=1}^k x_{da} \leq t, \forall d \in D,\\
&&&x_{da} \in \{0,1\}.
\end{aligned}
\end{equation}
where $x_{da}$ and $\aset$ are defined in Def. \ref{def:aset} and $t$ is the maximum number of adversaries to whom a data entry can be published.
\end{definition}
We prove that both problems above are NP-hard by reducing both versions of the privacy-aware data partitioning problem to the {\em hypegraph coloring problem}~\cite{Dinur:2002}.
\begin{theorem}
Both formulations  of the privacy-aware data partitioning problem are, in general, NP-hard to solve.
\end{theorem}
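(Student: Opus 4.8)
The plan is to exhibit a polynomial-time reduction from hypergraph $k$-coloring, which is known to be NP-hard \cite{Dinur:2002}, to a special case of each formulation. Recall that a hypergraph $H=(V,E)$ is $k$-colorable if its vertices can be assigned one of $k$ colors so that no hyperedge is monochromatic. The correspondence I will exploit is that a sensitive property $p$ is fully disclosed to an adversary exactly when \emph{all} the data entries in its set $D_p$ are handed to that single adversary; avoiding full disclosure of $p$ is therefore the same as refusing to make the hyperedge $D_p$ monochromatic under the ``coloring'' that the partition induces.

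Given an instance $(H,k)$ I would construct a \sparsi instance as follows. Let the data entries be the vertices, $D=V$, let there be $k$ adversaries, $A=\{a_1,\dots,a_k\}$, and let each hyperedge $e\in E$ become a sensitive property $p_e$ with $D_{p_e}=e$; this is exactly the dependency hypergraph of the construction and is built in linear time. Set the per-adversary disclosure to the monochromatic indicator $f_{a}(\aset_a)[p_e]=\prod_{d\in e}x_{da}$, which lies in $[0,1]$ and is a legitimate disclosure function, equalling $1$ iff adversary $a$ receives every entry of $e$. Fix the publication bound $t=1$, so the constraint $1\le\sum_a x_{da}\le t$ forces each entry onto exactly one adversary and an assignment $\aset$ is literally a $k$-coloring of $H$. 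Finally take the utility to be constant (say $u\equiv 0$), so that no feasible assignment is preferred on utility grounds.

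With this gadget both overall disclosure functions collapse to a coloring test. Under either norm, $f(\aset)=\max_{a}f'_a(\aset_a)=0$ holds iff no adversary receives a full hyperedge, i.e.\ iff the induced coloring is proper; any monochromatic hyperedge forces $\|f_a(\aset_a)\|_\infty=1$ and $\|f_a(\aset_a)\|_1/|P|\ge 1/|P|$, so $f(\aset)>0$ in both the worst-case and average cases. For \discbudget I would pick any budget $\tau_I\in(0,1)$, so that the strict constraint $f(\aset)<\tau_I$ is satisfiable iff some assignment achieves $f(\aset)=0$; hence the \discbudget instance is feasible iff $H$ is $k$-colorable, and deciding feasibility, a prerequisite to maximizing $u$, is already NP-hard. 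For \tradeoff, with $u\equiv0$ and any $\lambda>0$ the objective $u(\aset)+\lambda(\tau_I-f(\aset))$ is maximized precisely by the assignment minimizing $f(\aset)$, and its optimal value equals $\lambda\tau_I$ iff the minimum disclosure is $0$, i.e.\ iff $H$ is $k$-colorable; thus an exact solver for \tradeoff would decide $k$-colorability.

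The routine parts are the polynomial construction and reading off the correspondence. The main obstacle, and the step I would treat most carefully, is verifying that the reduction is faithful to the framework's definitions rather than to a convenient abstraction: that the monochromatic-product disclosure is a valid $f_a$, that the equivalence survives both the $f_\infty$ and $f_{L_1}$ aggregations simultaneously so that a single reduction establishes hardness for both disclosure models, and that the strict inequality in \discbudget together with the optimization-versus-decision gap is handled cleanly, which the choice $\tau_I\in(0,1)$ and a constant utility are designed to finesse. Since these special cases fall inside each general problem, NP-hardness of the special cases yields NP-hardness of both formulations.
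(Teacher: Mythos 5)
Your proposal is correct and follows essentially the same route as the paper's own proof: a reduction from hypergraph $k$-coloring using step (monochromatic-indicator) disclosure functions, a constant utility, and the constraint $t=1$ so that an assignment is exactly a $k$-coloring. If anything, your write-up is more careful than the paper's, since you state the reduction in the correct direction (from coloring to the partitioning problem), treat \discbudget and \tradeoff separately, and handle the strict budget inequality explicitly.
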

\begin{proof}
Fix a set of adversaries denoted by $A$ and a set of data entries denoted by $D$. Let $P$ denote the sensitive properties that correspond to data entries in $D$. We require to partition $D$ across the adversaries in $A$. Consider the following instance of the privacy-aware data partitioning problem. We require that each data entry be published to exactly one adversary. Moreover, we set the maximum budget on information disclosure to be 1. We also fix the overall information disclosure to be a step function of the following form: If all the data entries corresponding to a particular sensitive property are revealed to the same adversary the overall disclosure is 1 otherwise it is 0.  Finally we consider a constant utility function, which is always equal to 1. Considering the hypergraph representation of data and properties, it is easy to see that this problem is equivalent to the hypergraph coloring problem, which is NP-hard~\cite{Dinur:2002}. Reversing the above steps, one can easily reduce any instance of hypergraph coloring to the privacy-aware data partitioning problem.
\end{proof}

In the remainder of the paper we describe efficient heuristics for solving the partitioning problem -- we present approximation algorithms for specific information disclosure functions in Section \ref{sec:special}, and a greedy local-search heuristic for the general problem in Section \ref{sec:grasp}. Due to space constraints, henceforth, we will only focus on the \tradeoff formulation. Many of our algorithms also work for the \discbudget formulation (with slight modifications).


\section{Analyzing Specific Cases of Information Disclosure}
\label{sec:special}
In this section, we present approximation algorithms when the information disclosure function takes the following special forms:  1) step functions, 2) linearly increasing functions. The utility function is assumed to be submodular. 


\subsection{Step Functions}
\label{sec:step_functions}
Information disclosure functions that correspond to a step function can model cases when each sensitive property $p \in P$ is either fully disclosed or fully private. A natural application of step functions is the crowdsourcing scenario shown in Example \ref{ex:crowdsourcing}.  When certain fields of a medical transcription, e.g., name together with diagnosis, or gender together with the zip code and birth data, are revealed to an adversary the corresponding sensitive property for the patient is revealed. 

We continue by describing such functions formally. Let $D_p \subset D$ be the set of data entries associated with $p$. Property $p$ is fully disclosed only if $D_p$ is published in its entirety to an adversary. This can be modeled using a set of step functions $f_a \in F$: $f_a(D_a)[p] = 1$, if the set of data items assigned to adversary $a$ contains all the elements $D_p$ associated with property $p$. If $D_p \not\subseteq D_a$, then $f_a(D_a) = 0$. Observe that information disclosure is minimized (and is equal to $0$) when no adversary gets all the elements in $D_p$, for all $p$. For step functions we consider worst case disclosure, since ideally we do not want to fully disclose any property.

Considering \discbudget and \tradeoff separately is not meaningful for step functions. Since disclosure can only take the extreme values $\{0,1\}$, $\tau_I = 1$ in \tradeoff should be set to 0. Hence, full disclosure of a property always penalizes the utility.  Hence, one can reformulate the problem and seek for solutions that maximize the utility function under the constraint that information disclosure is 0, i.e., no property exists such that all its corresponding data entries are published to the same adversary.

Given these families of information disclosure functions and a submodular utility function, both formulations of privacy-aware data partitioning can be represented as an \emph{integer program (IP)}:
\begin{equation}
\label{eq:ilp_stepfunctions}
\begin{aligned}
& \underset{\aset \in \mathcal{P}(D\times|A|)}{\text{maximize}}
& & u(\aset) \\
& \text{subject to}
&& \sum\nolimits_{d \in D_p} x_{da} < |D_p|,\forall p \in P, \forall a \in A, \\
&&&1 \leq \sum\nolimits_{a=1}^k x_{da} \leq t, \forall d \in D,\\
&&&x_{da} \in \{0,1\}.
\end{aligned}
\end{equation}
where $t$ is the maximum number of adversaries to whom a particular data entry can be published.

The first constraint enforces that there is no full disclosure of a sensitive property. The partitioning constraint enforces that a data entry is revealed to at least one but no more than $t$ adversaries. Solving the optimization problem in (\ref{eq:ilp_stepfunctions}) corresponds to maximizing a submodular function under linear constraints. Recall that the utility function is submodular and observe that all constraints in the optimization problem presented above are linear. In fact be viewed as packing constraints.

For additive utility functions ($u = \sum_{a\in A}\sum_{d\in D} w_{da}x_{da}$), Equation \ref{eq:ilp_stepfunctions} becomes an integer linear program, that can be approximately solved in PTIME in two steps. First, one can solve a linear relaxation of Equation \ref{eq:ilp_stepfunctions}, where $x_{da}$ is some fraction in $[0,1]$. The resulting fractional solution can be converted into an integral solution using a {\em rounding strategy}.

The simplest rounding strategy, called {\em randomized rounding} \cite{raghavan:1987}, works as follows: assign data entry $d$ to an adversary $a$ with probability equal to $\hat{x}_{da}$, where $\hat{x}_{da}$ is the fractional solution to the linear relaxation. The value of the objective function achieved by the resulting integral solution is in expectation equal to the optimal value of the objective achieved by the linear relaxation. Moreover, randomized rounding preserves all constraints in expectation. A different kind of rounding, called {\em dependent rounding} \cite{gandhi:2006}, ensures that constraints are satisfied in the integral solution with probability 1. For an overview of different randomized rounding techniques and the quality of the derived solutions for budgeted problems we refer the reader to the work by Doerr et al.~\cite{doerr:2010}

One can solve the general problem with worst-case approximation guarantees by leveraging a recent result on submodular maximization under multiple linear constraints by Kulik et al. \cite{Kulik:2009}.
\newpage
\begin{theorem}
Let the overall utility function $u$ be a nondecreasing submodular function. One can find a feasible solution to the optimization problem in (\ref{eq:ilp_stepfunctions}) with expected approximation ratio of $(1-\epsilon)(1-e^{-1})$, for any $\epsilon > 0$, in polynomial time.
\end{theorem}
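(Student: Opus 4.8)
The plan is to recognize the integer program in~(\ref{eq:ilp_stepfunctions}) as an instance of maximizing a nondecreasing submodular function subject to a collection of linear constraints, and then to invoke the algorithm of Kulik et al.~\cite{Kulik:2009} essentially as a black box. The objective $u$ is nondecreasing and submodular by the modeling assumptions of Section~\ref{sec:utility}, so the substantive work is only to check that every constraint is linear in the assignment variables $x_{da}$ and that the feasible region matches the form their result handles. I would instantiate the ground set as the pairs $(d,a) \in D \times A$ and the objective as $u$, so that each candidate assignment set $\aset$ is a subset of this ground set.

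First I would separate the constraints into two kinds. The disclosure constraints $\sum_{d \in D_p} x_{da} \le |D_p| - 1$ (one per property--adversary pair) and the upper partition bounds $\sum_{a} x_{da} \le t$ are all \emph{packing} (knapsack-type) constraints of the form $\langle \mathbf{a}, \mathbf{x}\rangle \le b$ with nonnegative coefficients; these are exactly the constraints to which the Kulik et al. framework applies. The lower partition bounds $\sum_{a} x_{da} \ge 1$ are \emph{covering} constraints and must be handled separately. Given only the packing constraints, the algorithm of~\cite{Kulik:2009} (i) optimizes a continuous relaxation of $u$ over the knapsack polytope, yielding a fractional point attaining at least $(1-e^{-1})$ of the optimum, and (ii) rounds this point to an integral assignment while losing only a $(1-\epsilon)$ factor and respecting all packing constraints; composing the two stages gives the claimed $(1-\epsilon)(1-e^{-1})$ expected ratio in polynomial time.

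The main obstacle is handling the covering constraints $\sum_a x_{da} \ge 1$, which do not fit the packing formulation, and guaranteeing that the returned assignment satisfies them. Here I would lean on monotonicity together with the dimension assumption of Section~\ref{sec:dependencygraph} (every $D_p$ is larger than $|A|$, which is what makes a feasible partition exist): since $u$ is nondecreasing, any data entry $d$ left unassigned by the rounded solution can be repaired by greedily placing it with an adversary for which adding $d$ violates no disclosure constraint, and the feasibility/dimension assumption ensures such an adversary exists because no single adversary can be forced to hold all of some $D_p$. This repair step only increases $u$, so it restores feasibility without harming the approximation guarantee. I would verify this repair argument carefully, since it is where the covering and packing constraints interact.

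A secondary subtlety worth flagging is quantitative rather than qualitative: the guarantee of~\cite{Kulik:2009} is polynomial-time only for a \emph{fixed} number of knapsack constraints, whereas~(\ref{eq:ilp_stepfunctions}) has roughly $|P|\cdot|A|$ disclosure constraints. The cleanest reading of the theorem therefore treats the numbers of adversaries and properties as constant parameters of the instance; for the fully general regime one would instead have to appeal to submodular maximization under arbitrary packing constraints, at the cost of a weaker but still constant-factor guarantee. I would state explicitly which regime the $(1-\epsilon)(1-e^{-1})$ bound refers to, so that the polynomial-time claim is not overstated.
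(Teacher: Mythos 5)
Your proposal takes essentially the same route as the paper: the paper's entire proof is a one-line appeal to Theorem 2.1 of Kulik et al.~\cite{Kulik:2009}, which is exactly the black-box invocation you describe, followed by an informal sketch of their relaxation-plus-rounding-plus-enumeration framework. The two subtleties you flag --- that the covering constraints $\sum_{a} x_{da} \geq 1$ fall outside the packing framework and need a separate repair argument, and that the $(1-\epsilon)(1-e^{-1})$ guarantee of Kulik et al.\ is polynomial-time only for a fixed number of knapsack constraints whereas (\ref{eq:ilp_stepfunctions}) has on the order of $|P|\cdot|A|$ of them --- are genuine issues that the paper's own proof silently glosses over, so your treatment is, if anything, more careful than the original.
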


\begin{proof}
This holds directly by Theorem 2.1 of Kulik et al. \cite{Kulik:2009}. 
\end{proof}

To achieve this approximation ratio, Kulik et al. introduce a framework that first obtains an approximate solution for a continuous relaxation of the problem, and then uses a non-trivial combination of a randomized rounding procedure with two enumeration phases, one on the most {\em profitable elements}, and the other on the `big' elements, i.e., elements with high cost. This combination enables one to show that the rounded solution can be converted to a feasible one with high expected profit. Due to the intricacy of the algorithm we refer the reader to Kulik et al.~\cite{Kulik:2009} for a detailed description of the algorithm.

\subsection{Linearly Increasing Functions}
\label{sec:linear_incr_functions}
In this section, we consider linearly increasing disclosure functions. Linear disclosure functions can naturally model situtations where each data entry independently affects the likelihood of disclosure. In particular, if normalized log-likelihood is used as a measure of information disclosure, the disclosure function takes the linear form presented below. We consider the following additive form for the disclosure of property $p$: 
\begin{equation}
\label{eq:lineardisclosure}
f_{a}(\cdot)[p] = \sum\nolimits_{d \in D_p} a_{dp} x_{da}
\end{equation}
where $a_{dp}$ is a weight associated with the information that is disclosed about property $p$ when data $d$ is revealed to an adversary. 

\eat{
\noindent\textbf{DiscBudget.} Recall that in this version of the problem we seek to maximize the overall utility while setting an upper limit constraint on information leakage. Now, we show how both variations of information disclosure (worst and average) introduced in Section \ref{sec:infodisclosure} correspond to similar optimization problems.

As shown in Equation \ref{eq:unified_disclosure} both types of disclosure can be unified under the same equation by setting $f_a(\cdot)$ as in Equation \ref{eq:lineardisclosure}. Moreover, observe that the second constraint in optimization problem (\ref{eq:sec_part_max_utility}) sets an upper limit on information disclosure. Therefore, we can replace this single constraint with multiple constraints (one for each adversary in $A$) of the form $f^{\prime}_a(\aset) \leq \tau_{I},~\forall a \in A$.

Based on the aforementioned observations we rewrite the optimization problem in (\ref{eq:sec_part_max_utility}) as follows:
\begin{equation}
\label{eq:secure_partitioning_maxutility_linear}
\begin{aligned}
& \underset{\aset \in \mathcal{P}(D \times A)}{\text{maximize}}
&& u(\aset) \\
& \text{subject to}
&& f^{\prime}_{a}(\aset) < \tau_{I}, \forall a \in A, \\
&&& 1 \leq \sum\nolimits_{a=1}^k x_{da} \leq t, \forall d \in D,\\
&&&x_{da} \in \{0,1\}.
\end{aligned}
\end{equation}
where $t$ is the maximum number of adversaries to whom a particular data entry can be published.
We elaborate on the information disclosure constraint for worst and average information disclosure shown in Equation \ref{eq:worst_case_disclosure} and Equation \ref{eq:total_disclosure}. For worst disclosure, this constraint can be rewritten as:
\begin{equation}
\label{eq:disc_constraint_1}
\sum\nolimits_{d \in D_p} a_{dp} x_{da} < \tau_{I},~~\forall p \in P,~~\forall a \in A
\end{equation}
For average disclosure, this constraints can be written as:
\begin{equation}
\label{eq:disc_constraint_2}
\frac{1}{|P|}\sum\nolimits_{p \in P}\sum\nolimits_{d \in D_p} a_{dp} x_{da} < \tau_{I},~~\forall a \in A
\end{equation}
The optimization problem presented above corresponds to a binary integer programming (IP) problem as well. Similarly to Section \ref{sec:step_functions}, this optimization problem, also, corresponds to the problem of maximizing a submodular function under multiple linear constraints. Hence, one can find good approximate solutions in polynomial time, which satisfy the following worst-case guaranties.
\begin{theorem}
Let the overall utility function $u$ be a nondecreasing submodular function. One can find a feasible solution to the optimization problem in (\ref{eq:secure_partitioning_maxutility_linear}) with expected approximation ration of $(1-\epsilon)(1-e^{-1})$, for any $\epsilon > 0$, within polynomial time
\end{theorem}

\begin{proof}
This holds directly by Theorem 2.1 of Kulik et al. \cite{Kulik:2009}.  The proposed algorithm can be used to find an approximate solution for this problem.
\end{proof}
}

We can rewrite the \tradeoff problem statement as:
\begin{equation}
\begin{aligned}
& \underset{\aset \in \mathcal{P}(D \times A)}{\text{maximize}}
&& u(\aset) +\lambda(\tau_{I} - \max_{a \in A}(f^{\prime}_a(\aset_a)))\\
& \text{subject to}
&& 1 \leq \sum\nolimits_{a=1}^k x_{da} \leq t, \forall d \in D,\\
&&&x_{da} \in \{0,1\}.
\end{aligned}
\end{equation}
When the utility function is additive, the above problem is an integer linear program, and hence can be solved by rounding the LP relaxation as explained in the previous section.

However, for general submodular $u(\cdot)$, the objective is not submodular anymore -- the max of the additive information disclosure functions is not necessarily supermodular~\cite{fujishige:2005}. Hence, unlike the case of step functions, we cannot use the result of Kulik et al. \cite{Kulik:2009} to get an efficient approximate solution.

Nevertheless, we can compute approximate solutions in PTIME by considering the following max-min formulation of the problem:
\begin{equation}
\label{eq:minimax_linear}
\begin{aligned}
& \underset{\aset \in \mathcal{P}(D \times A)}{\text{maximize}}
&& \min_{a \in A} (u(\aset) + \lambda(\tau_{I}-f^{\prime}_a(\aset_a)))\\
& \text{subject to}
&& 1 \leq \sum\nolimits_{a=1}^k x_{da} \leq t, \forall d \in D,\\
&&&x_{da} \in \{0,1\}.
\end{aligned}
\end{equation}

Since the overall utility function is a nondecreasing submodular function, and the disclosure for each adversary is additive, the objective now is a max-min of submodular functions. More precisely, for worst-case disclosure (Equation \ref{eq:worst_case_disclosure}), the optimization objective can be rewritten as:
\begin{equation}
\begin{aligned}
& \underset{\aset \in \mathcal{P}(D\times A)}{\text{maximize}}
& & \min_{a \in A, p \in P} (u(\aset) + \lambda(\tau_{I} -f_a(\aset_a)[p]))\\
\end{aligned}
\end{equation}
and, for average disclosure (Equation \ref{eq:total_disclosure}), it can be written as:
\begin{equation}
\begin{aligned}
& \underset{\aset \in \mathcal{P}(D\times A)}{\text{maximize}}
&& \min_{a \in A} (u(\aset) + \lambda(\tau_{I} -\frac{1}{|P|}\sum\nolimits_{p \in P}f_a(\aset_a)[p]))\\
\end{aligned}
\end{equation}
The above max-min problem formulation is closely related to the {\em max-min fair allocation} problem \cite{Golovin:2005} for both types of information disclosure functions.
The main difference between Problem (\ref{eq:minimax_linear}) and the max-min fair allocation problem is that data items may be assigned to multiple advrsaries. In the max-min fair allocation problem a data item is assigned exactly once. If $t =1$ then the two problems are equivalent, and thus, one can provide worst case guaranties on the quality of the approximate solution. The problem of max-min fair allocation was studied by Golovin \cite{Golovin:2005}, Khot and Ponnuswami \cite{Khot:2007}, and Goemans et al. \cite{Goemans:2009}. Let $n$ denote the total number of data entries ({\em goods} in the max-min fair allocation problem) and $m$ denote the number of adversaries ({\em buyers} in the max-min fair allocation problem). The first two papers focus on additive functions and give algorithms achieving an $(n - m + 1)$-approximation and a $(2m + 1)$-approximation respectively, while the third one gives a $\mathcal{O}(n^{\frac{1}{2}}m^{\frac{1}{4}}\log n \log^{\frac{3}{2}}m)$-approximation.

\subsection{Quadratic Functions}
\label{sec:quad}
In this section 
we consider \emph{quadratic} disclosure functions of the following form:
\begin{equation} \label{eq:quad_func}
f_{a}(\cdot)[p] = \left( \sum\nolimits_{d \in D_p} a_{dp}x_{da}\right)^2
\end{equation}
where $a_{dp}$ and $x_{da}$ are defined as before. Since $f_{a}(D_p)[p] = 1$ we have that $(\sum\nolimits_{d \in D_p} a_{dp})^2 = 1$.
We assume that the utility is an additive function following the form of Equation \ref{eq:adv_based_util}, and do not consider generic submodular case.

\eat{
\noindent\textbf{DiscBudget.} Similarly to Equation \ref{eq:secure_partitioning_maxutility_linear} the two different variations of information disclosure correspond to optimization problems of the following form:
\begin{equation}
\label{eq:secure_partitioning_maxutility_quadratic}
\begin{aligned}
& \underset{\aset \in \mathcal{P}(D \times A)}{\text{maximize}}
& & \sum\nolimits_{d \in D} \sum\nolimits_{a \in A} w_{da}x_{da} \\
& \text{subject to}
&& f^{\prime}_{a}(\aset_a) < \tau_{I}, \forall a \in A, \\
&&& 1 \leq \sum\nolimits_{a=1}^k x_{da} \leq t, \forall d \in D,\\
&&&x_{da} \in \{0,1\}.
\end{aligned}
\end{equation}
where $t$ is the maximum number of adversaries to whom a particular data entry can be published.
The information disclosure constraint for worst disclosure can be written as:
\begin{equation}
\label{eq:disc_constraint_quad_1}
(\sum\nolimits_{d \in D_p} a_{dp} x_{da})^2 < \tau_{I},~~\forall p \in P,~~\forall a \in A
\end{equation}
and for average disclosure as:
\begin{equation}
\label{eq:disc_constraint_quad_2}
\frac{1}{{|P|}}\sum\nolimits_{p \in P}(\sum\nolimits_{d \in D_p} a_{dp} x_{da})^2 < \tau_{I},~~\forall a \in A
\end{equation}
Both versions correspond to a {\em 0-1 Quadratic Constraint Problem (QCP)} \cite{Boyd:2004}, which is, in general, NP-hard to solve.

This problem can be converted to Problem (\ref{eq:secure_partitioning_maxutility_linear}) under Constraint (\ref{eq:disc_constraint_1}) by taking the square root of both sides of Constraint (\ref{eq:disc_constraint_quad_1}).

For the second type of disclosure, observe that the quadratic constraints are convex. Leveraging convexity we can find an approximate solution for this convex integer-QCP problem by relaxing it to an equivalent {\em Second-order Cone Programming (SOCP)} problem~\cite{Boyd:2004} . The fractional solution of the equivalent SOCP relaxation can be used to approximate the integral solution of the aforementioned QCP problem.  To write the optimization problem as an SOCP we need to rewrite the constraint in Equation \ref{eq:disc_constraint_quad_2} as follows:
\begin{equation}
\label{eq:rewritten_constr}
\begin{aligned}
& \frac{1}{{|P|}}\sum\nolimits_{p \in P}(\sum\nolimits_{d\in D_p} a^2_{dp}x^2_{da} + 2\sum\nolimits_{d,l \in D_p:d < l} a_{dp}a_{lp}x_{da}x_{la})=\\
&=\sum\nolimits_{d\in D} a^{\prime 2}_{d}x^2_{da} + 2\sum\nolimits_{d,l \in D:d < l} a^{\prime}_{d}a^{\prime}_{l}x_{da}x_{la}\\
&=\mathsf{x_a^T}\mathsf{A_{p}}\mathsf{A^T_{p}}\mathsf{x_a}=\mathsf{(A_{p}x_a)^T}\mathsf{A_{p}x_a}
\end{aligned}
\end{equation}
where $\mathsf{x_a}$ corresponds to a vector representation of the assignment set $\aset_a$ and $\mathsf{A^T_{p}} = [a^{\prime}_1, a^{\prime}_2, \dots , a^{\prime}_{|D|}]$  is a positive vector that contains the appropriate weights for all data entries with respect to all properties $p \in P$.

This SOCP problem can be solved in polynomial time to within any level of accuracy by using an {\em interior-point} method~\cite{Boyd:2004}. Solving the SOCP, we retrieve the fractional solutions $\hat{x}_{da}$ for which the objective in Problem (\ref{eq:secure_partitioning_maxutility_quadratic}) is maximized and the constraints shown above are satisfied.
}

The \tradeoff optimization function can be rewritten as:
\begin{equation}
\begin{aligned}
& \underset{\aset \in \mathcal{P}(D \times A)}{\text{maximize}}
&& \sum\nolimits_{d \in D} \sum\nolimits_{a \in A} w_{da}x_{da}  + \lambda(\tau_{I} - \max_{a \in A}(f^{\prime}_a(\aset_a)))\\
\end{aligned}
\end{equation}
where $f^{\prime}_a(\cdot)$ is a quadratic function. The internal maximization over information disclosure functions can be removed from the objective by rewriting the optimization problem as:
\begin{equation}
\label{eq:max_trade_quadratic_min}
\begin{aligned}
& \underset{\aset \in \mathcal{P}(D \times A)}{\text{maximize}}
& & \sum\nolimits_{d \in D} \sum\nolimits_{a \in A} w_{da}x_{da}
 + \lambda(\tau_{I}- y) \\
& \text{subject to}
&& y \geq f^{\prime}_a(\aset_a), \forall a \in A, \\
&&& 1 \leq \sum\nolimits_{a=1}^k x_{da} \leq t, \forall d \in D,\\
&&&x_{da} \in \{0,1\}.
\end{aligned}
\end{equation}
Since all constraints are either linear or quadratic, the above problem is an integer {\em 0-1 Quadratic Constraint Problem (QCP)} \cite{Boyd:2004}, which is, in general, NP-hard to solve. In order to derive an approximate solution in PTIME, we relax this problem to an equivalent {\em Second Order Conic Programming (SOCP)} problem \cite{Boyd:2004}. A SOCP problem can be solved in polynomial time to within any level of accuracy by using an {\em interior-point} method~\cite{Boyd:2004} resulting in a fractional solution $\hat{x}_{da}$.

First, we focus on the constraints shown in Problem (\ref{eq:max_trade_quadratic_min}). The constraints for worst and average disclosure can be written as:
\begin{equation}
\label{eq:cons2}
\begin{aligned}
&(\sum\nolimits_{d \in D_p} a_{dp} x_{da})^2 = \mathsf{(A_{ap}x_{ap})^TA_{ap}x_a},~\forall p \in P,~~\forall a \in A \\
&\frac{1}{{|P|}}\sum\nolimits_{p \in P}(\sum\nolimits_{d \in D_p} a_{dp} x_{da})^2 = \mathsf{(A_{p}x_a)^T}\mathsf{A_{p}x_a},~~\forall a \in A \\
\end{aligned}
\end{equation}
where $\mathsf{x_a}$ corresponds to a vector representation of the assignment set $\aset_a$ and $\mathsf{A^T_{p}} = [a^{\prime}_1, a^{\prime}_2, \dots , a^{\prime}_{|D|}]$  is a positive vector that contains the appropriate weights for all data entries with respect to all properties $p \in P$.

Both constraints follow the quadratic form $\mathsf{(Ax)^T}\mathsf{Ax}$, where $\mathsf{x}$ is a vector representation of the assignment set $\aset_a$ and $A$ is a matrix containing the appropriate $A_{ap}$'s or $A_{p}$'s based on the type of information disclosure we are using. Let $C$ denote the total number of constraints with respect to the disclosure function. Observe that we have $C = |P||A|$ and $C = |A|$ for the two cases of information disclosure respectively.

The next step is to incorporate variable $y$ in the optimization problem. For that we extend vector $\mathsf{x}$ to include variable $y$. The new variable vector is $\left[ y~\mathsf{x} \right]^{\mathsf{T}}$. We can rewrite the quantities in Equation \ref{eq:cons2} as follows:
\begin{equation}
\left(\left[ \begin{array}{cc} 0 & \mathsf{A} \end{array} \right] \left[ \begin{array}{c} y \\ \mathsf{x} \end{array} \right]\right)^{\mathsf{T}}\left(\left[ \begin{array}{cc} 0 & \mathsf{A} \end{array} \right] \left[ \begin{array}{c} y \\ \mathsf{x} \end{array} \right]\right),~\forall c \in [C]
\end{equation}
where $\mathsf{A}$ and $\mathsf{x}$ are as defined above. Finally, we have that the equivalent SOCP problem to the initial QCP problem is:
\begin{equation}
\label{eq:max_min_socp}
\begin{aligned}
& \underset{q}{\text{maximize}}
&& \left[ \begin{array}{cc} -\lambda &\mathsf{W} \end{array} \right] \mathsf{q} +  \lambda\tau_{I}\\
& \text{subject to}
&& \left[ \begin{array}{cc}1 & 0\end{array}\right]\mathsf{q} \geq \mathsf{(A^{\prime}q)^T(A^{\prime}q)},~\forall c \in [C], \\
&&& 1 \leq \sum\nolimits_{a=1}^k x_{da} \leq t, \forall d \in D.
\end{aligned}
\end{equation}
where $\mathsf{q} = \left[ \begin{array}{cc} y & \mathsf{x} \end{array} \right]^{\mathsf{T}}$, $\mathsf{A}^{\prime} = \left[ \begin{array}{cc} 0 & \mathsf{A} \end{array} \right]$.

Finally, the fractional solutions $\hat{x}_{da}$ obtained from the SOCP needs to be converted to an integral solution. We point out that for the \tradeoff problem no guarantees can be derived on the value of the objective function of the integral solution, when naive randomized rounding schemes, such as setting $x_{da} = 1$ with $\Pr[X_{ad} = 1] = \hat{x}_{da}$ are used. Thus, finding a rounding scheme that ensures that the objective of the integral solution is equal to that of the fractional solution in expectation is an open problem.


\section{A Greedy Local Search \\Heuristic}
\label{sec:grasp}
So far we studied specific families of disclosure functions to derive worst-case guaranties for the quality of approximate solutions. In this section we present two greedy heuristic algorithms suitable for any general disclosure function. We still require the utility function to be submodular. Our heuristics are based on hill climbing and the {\em Greedy Randomized Adaptive Local Search Procedure}(GRASP)~\cite{feo:1995}. Notice that local search heuristics are known to perform well when maximizing a submodular objective function \cite{fujishige:2005}. 
Again, we only focus on the \tradeoff optimization problem(see ~Equation \ref{eq:sec_part_tradeoff}).

{
\small{
\begin{algorithm}[h]
\caption{Overall Algorithm}
\begin{algorithmic}[1]
\STATE {\bf Input:} $A$: set of adversaries; $G$: objective function;  \\
                $r$: number of repetitions; $t$: max. adversaries per data item
\STATE {\bf Output:} $M_{opt}$: a data-to-adversary assignment matrix
\FORALL{$i = 1 \to r$}
    \STATE $M_{\emptyset} \leftarrow\mbox{ empty assignment }$, $g_{opt} \leftarrow G(M_{\emptyset})$
	\STATE $\langle M_{ini},g_{ini} \rangle \leftarrow {\sf CONSTRUCTION}(G,A,t)$;
	\STATE $\langle M,g \rangle \leftarrow {\sf LOCALSEARCH}(G,A,t,M_{ini},g_{ini})$;
	\IF {$g > g_{opt}$}
		\STATE $M_{opt} \leftarrow M$; $g_{opt} \leftarrow g$;
	\ENDIF
\ENDFOR
\RETURN $M_{opt}$;
\end{algorithmic}
\label{algo:grasp}
\end{algorithm}
}
}
\subsection{Overall Algorithm}
Our algorithm proceeds in two phases (Algorithm~\ref{algo:grasp}). The first phase, which we call {\em construction}, constructs a data-to-adversary assignment matrix $M_{ini}$ by greedily picking assignments that maximize the specified objective function $G(\cdot)$, i.e., the tradeoff between utility and information disclosure, while ensuring that each data item is assigned to at least one and at most $t$ adversaries. The second phase, called {\em local-search}, searches for a better solution in the neighborhood of the $M_{ini}$, by changing one assignment of one data item at a time if it improves the objective function, resulting in an assignment $M$. The construction algorithm may be randomized; hence, the overall algorithm is executed $r$ times, and the best solution $M_{opt} = \mbox{argmax}_{\{M_1, \ldots, M_r\}} G(M_i)$ is returned as the final solution.

{
\small{
\begin{algorithm}[h]
\caption{CONSTRUCTION}
\begin{algorithmic}[1]
\STATE {\bf Input:} $G$: objective function; $A$: set of adversaries; \\
$t$: max. adversaries per data item\\
\STATE {\bf Output:} $\langle M, g \rangle$: data-to-adversary assignment, objective value
\STATE $maxIterations \leftarrow t\cdot|D|$
\STATE Initialize: $M \leftarrow \mbox{ empty assignment }$
\FORALL {$i \in [1,maxIterations]$}
    \STATE {\em // Compute a set of candidate assignments}
	\STATE $D_M \leftarrow \mbox{ data entries assigned to } <t \mbox{ adversaries in } M $;
    \STATE Let $S \leftarrow D_M \times A - M$
    \STATE {\em // Pick the next best assignment that improves the objective}
	\STATE $\langle d,a \rangle \leftarrow \mbox{ {\sc PickNextBest}}(M, g, S, G)$
	\IF {$\langle d,a \rangle$ is NULL}
            \STATE break; {\em // No new assignments improve the objective}
	\ENDIF
	\STATE Assign the selected data entry $d$ to the selected adversary $a$;
\ENDFOR
\RETURN $\langle M, G(M) \rangle$;
\end{algorithmic}
\label{algo:grasp_construction}
\end{algorithm}
}}

\subsection{Construction Phase}
The construction phase (Algorithm \ref{algo:grasp_construction}) starts with an empty data-to-adversary assignment matrix and greedily adds a new $\langle d,a \rangle$ assignment to the mapping $M$ if it improves the objective function. This is achieved by iteratively performing two steps. The algorithm first computes a set of candidate assignments $S$. For any data item $d$ (which does not already have $t$ assignments), and any adversary $a$, $\langle d,a \rangle$ is a candidate assignment if it does not appear in $M$.


\eat{
\small{
\begin{algorithm}[h]
\caption{{\sc SelectCandidates}}
\begin{algorithmic}[1]
\STATE {\bf Input:} $A$: set of adversaries; $M$: current assignment; \\
$t$: max adversaries per data item\\
\STATE {\bf Output:} $S$: Set of candidate assignments
\STATE {\bf GLOBAL:}
\STATE \ \ \ $D_M \leftarrow \mbox{ data items assigned to } <t \mbox{ adversaries in } M $;
\STATE \ \ \ $S \leftarrow D_M \times A - M$
\STATE {\bf LOCAL:}
\STATE \ \ \ $d_{cur} \leftarrow \mbox{ data item considered in the previous iteration}$
\STATE \ \ \ $D_M \leftarrow \mbox{ set of data items with a larger id}$;
\STATE \ \ \ If $d_{cur}$ is assigned to $<t$ adversaries, $D_M \leftarrow D_M \cup \{d_{cur}\}$
\STATE \ \ \ $S \leftarrow D_M \times A - M$
\STATE
\RETURN $S$
\end{algorithmic}
\label{algo:select}
\end{algorithm}
}}

{
\small{
\begin{algorithm}[h]
\caption{{\sc PickNextBest}}
\begin{algorithmic}[1]
\STATE {\bf Input:} $G$: objective function; $M$: current assignment; \\
$g$: current value of objective, $S$: possible new assignments\\
\STATE {\bf Output:} new assignment $\langle d^\star,a^\star \rangle$, or NULL
\STATE {\bf GREEDY:}
\STATE \ \ \ $\langle d^\star,a^\star \rangle \leftarrow \mbox{argmax}_{\langle d,a \rangle \in S}G(M \cup \langle d,a \rangle)$
\STATE {\bf GRASP:}
\STATE \ \ \ Pick the top-$n$ assignments $S_n$ having the highest value for $g_{\langle d,a \rangle} = G(M \cup \langle d,a \rangle)$ from $S$, and having $g_{\langle d,a \rangle} > g$.
\STATE \ \ \ $\langle d^\star,a^\star \rangle$ is drawn uniformly at random from $S_n$
\IF {$G(M \cup \langle d,a \rangle) > g$}
\RETURN $\langle d,a \rangle$
\ELSE
\RETURN NULL
\ENDIF
\end{algorithmic}
\label{algo:picknext}
\end{algorithm}
}}

Second, the algorithm picks the next best assignment from the candidates (using {\sc PickNextBest}, Algorithm~\ref{algo:picknext}). We consider two methods for picking the next best assignment -- GREEDY and GRASP. The GREEDY strategy picks $\langle d^\star, a^\star \rangle$ that maximizes the objective $G(M \cup \langle d^\star, a^\star \rangle)$. On the other hand, GRASP identifies a set $S_n$ of top $n$ assignments that have the highest value for the objective  $g_{\langle d, a \rangle} = G(M \cup \langle d, a \rangle)$, such that $g_{\langle d, a \rangle}$ is greater than the current value of the objective $g$. Note that $S_n$ can contain less than $n$ assignments. The GRASP strategy picks an assignment $\langle d^\star, a^\star \rangle$ at random from $S_n$. Both strategies return NULL if $\langle d^\star, a^\star \rangle$ does not improve the value of the objective function.
The construction stops when no new assignment can improve the objective function.

\noindent{\bf Complexity:}
The run time complexity of the construction phase is $O(t\cdot |A|\cdot |D|^2)$. There are $O(t \cdot|D|)$ iterations, and each iteration may have a worst case running time of $O(|D| \cdot |A|)$. {\sc PickNextBest} permits a simple parallel implementation.


{\small{
\begin{algorithm}[h]
\caption{LOCALSEARCH}
\begin{algorithmic}[1]
\STATE {\bf Input:} $G$: objective function; $A$: set of adversaries;\\
 $t$: max. assignments per data item; $M$: current assignment; \\
 $g$: current objective value
\STATE {\bf Output:} $\langle M_{opt},g_{opt}\rangle$: the new assignment, the corresponding objective value
\FORALL {$d \in D$}
	\STATE $A_d \leftarrow $ the set of adversaries to whom  data item $d$ is assigned (according to current assignment $M$);
    \STATE {\em // Construct a set of neighboring assignments}
    \STATE $N_d \leftarrow \{M\}$.
    \STATE  {\bf if} ($|A_d| < t$) {\bf then} $N_d \leftarrow N_d \cup \{ M \cup \{\langle d, a'\rangle \}| \forall a' \not\in A_d \}$;
	\FOR {each adversary $a \in A_d$}
		\STATE  $N_d \leftarrow N_d \cup \{M - \{\langle d,a \rangle\}\}$
		\STATE  $N_d \leftarrow N_d \cup \{ M - \{\langle d,a \rangle\} \cup \{\langle d, a'\rangle \}| \forall a' \not\in A_d \}$;
	\ENDFOR
    \STATE {\em // Pick the neighboring assignment with maximum objective}
    \STATE $M_{opt} \leftarrow \mbox{argmax}_{M' \in N_d} G(M')$
    \STATE $M \leftarrow M_{opt}$
\ENDFOR
\RETURN $\langle M_{opt}, G(M_{opt})\rangle$;
\end{algorithmic}
\label{algo:grasp_localsearch}
\end{algorithm}
}
}

\subsection{Local-Search Phase}
The second phase employs local search (Algorithm \ref{algo:grasp_localsearch}) to improve the initial assignment $M_{ini}$ output by the construction phase. In this phase, the data items are considered exactly once in some (random) order. Given the current assignment $M$, for each data item, a set of neighboring assignments $N_d$ (including $M$) are considered by (i) removing an assignment to an adversary $a$ in $M$, (ii) modifying the assignment from adversary $a$ to an adversary $a'$ (that $d$ was not already assigned to in $M$), and (iii) adding a new assignment (if $d$ is not already assigned to $t$ adversaries in $M$). Next, the neighboring assignment in $N_d$ with the maximum value for the objective $M_{opt}$ is picked. The next iteration considers the data item succeeding $d$ (in the ordering) with $M_{opt}$ as the current assignment. We found that making a second pass of the dataset in the local search phase does not improve the value of the objective function.

\noindent{\bf Complexity:} The run time complexity of the local-search phase is $O(t\cdot |A|\cdot |D|)$.

\subsection{Extensions}
\label{sec:extensions}
The construction phase (Algorithm \ref{algo:grasp_construction}) has a run time that is quadratic in the size of $D$. This is because in each iteration, the {\sc PickNextBest} subroutine computes a {\em global} maximum assignment across all data-adversary pairs. While this approach makes the algorithm more effective in avoiding local minima it reduces its scalability due to its quadratic cost.

To improve scalability, one can adopt a {\em local} myopic approach during construction. Instead of considering all possible (data,adversary) pairs when constructing the list of candidate assignments (see Ln.~8 in Algorithm~\ref{algo:grasp_construction}), one can consider a single data entry $d$ and populate the set of candidate assignments $S$ using only (data,adversary) pairs that contain $d$. More specifically, we fix a total ordering of the data entries $\mathcal{O}$, and perform $t$ iterations of the following:
\squishlist
\item Consider the next data item $d$ in $\mathcal{O}$. Let $M$ be the current assignment.
\item Construct $S$ as $(\{d\}\times A) - M$.
\item Pick the next best assignment in $S$ using Algorithm \ref{algo:picknext} (GREEDY or GRASP) that improves the objective function.
\item Update the current assignment $M$, and proceed with the next data entry in $\mathcal{O}$.
\squishend
\noindent{\bf Complexity:} The run time complexity of the myopic-construction phase is $O(t\cdot |A|\cdot |D|)$.

\subsection{Correctness}
\label{sec:correctness}
While both the construction and local search phases carefully ensure that each data item is assigned to no more that $t$ adversaries, we still need to prove that each data item is assigned to at least one adversary. To ensure this lower bound on the cardinality, we use the following objective function $G$:
\begin{equation}
\label{eq:grasp_construct_objective}
G(\cdot) =  u(\cdot) +\lambda(\tau_{I} - f(\cdot)) - \mathcal{C}
\end{equation}
where $\mathcal{C}$ is the number of data items that are not assigned to any adversary and $ \lambda \in [0,1]$.
The above objective function adds a penalty term $\mathcal{C}$ to the tradeoff between the utility and the information disclosure, i.e., $u(\cdot) +\lambda(\tau_{I} - f(\cdot))$. We can show that this penalty ensures that every data item is assigned to at least one adversary. We have the following theorem:
\begin{theorem}
Using $G(\cdot) =  u(\cdot) +\lambda(\tau_{I} - f(\cdot)) - \mathcal{C}$ with $\lambda \in [0,1]$ as the objective function in Algorithm~\ref{algo:grasp} returns a solution where all cardinality constraints are satisfied.
\end{theorem}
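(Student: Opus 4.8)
The plan is to split the cardinality constraint $1 \le \sum_{a} x_{da} \le t$ into its upper and lower halves and treat them by entirely different means. The upper bound $\sum_a x_{da}\le t$ requires no reasoning about $G$ at all: it is enforced syntactically by the algorithm. In the construction phase (Algorithm~\ref{algo:grasp_construction}) the candidate set $S$ is built only from data items currently assigned to fewer than $t$ adversaries, and in the local-search phase (Algorithm~\ref{algo:grasp_localsearch}) a new assignment for $d$ is added to the neighborhood $N_d$ only when $|A_d|<t$, while the remove and swap moves never increase the count. Hence $t$ is respected at every step, and the entire burden of the proof is the lower bound: that the returned assignment leaves no data item unassigned.

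For the lower bound I would analyze the marginal effect on $G$ of assigning a currently unassigned item $d$ to an arbitrary adversary $a$. Writing $M' = M \cup \{\langle d,a\rangle\}$ and noting that the constant $\tau_{I}$ cancels, the change decomposes as
\begin{equation*}
G(M') - G(M) = \big(u(M') - u(M)\big) - \lambda\big(f(M') - f(M)\big) + \big(\mathcal{C}(M) - \mathcal{C}(M')\big).
\end{equation*}
The third term is exactly $+1$, since assigning a previously unassigned item reduces the count of unassigned items by one. The first term is nonnegative because $u$ is nondecreasing. For the middle term I would use that $f$ takes values in $[0,1]$ and is monotone in the revealed data, so $0 \le f(M')-f(M) \le 1$; together with $\lambda \in [0,1]$ this gives $\lambda\big(f(M')-f(M)\big)\le 1$. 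Combining the three bounds yields $G(M')-G(M) \ge 0 - 1 + 1 = 0$, and more precisely $G(M')-G(M) \ge 1-\lambda \ge 0$.

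With this marginal inequality the conclusion follows from the stopping conditions of the two phases. Suppose, for contradiction, that the output $M$ leaves some item $d$ unassigned. The construction phase halts (via {\sc PickNextBest}, Algorithm~\ref{algo:picknext}) only when no candidate assignment strictly improves $G$, and the local-search phase returns an assignment maximizing $G$ over a neighborhood that, for an unassigned $d$, contains every single extension $M\cup\{\langle d,a\rangle\}$. By the bound above, each such extension satisfies $G(M\cup\{\langle d,a\rangle\}) \ge G(M)$, so assigning $d$ never lowers the objective; a local optimum can therefore always be taken to assign $d$, contradicting that $d$ is left unassigned.

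The delicate point, and the main obstacle I expect, is the boundary case $\lambda = 1$ combined with a marginal disclosure jump $f(M')-f(M)=1$ and zero marginal utility, where the bound degrades to $G(M')-G(M)\ge 0$ with equality rather than a strict gain. For every $\lambda<1$ the quantity $1-\lambda$ is strictly positive, so the construction phase alone assigns all items and no tie ever arises. For $\lambda=1$ I would close the gap by breaking ties in the argmax toward the feasible (assigning) neighbor, which is legitimate precisely because that neighbor is weakly optimal by the inequality; equivalently, one argues that among the equally optimal neighbors at least one assigns $d$, and the algorithm is entitled to return it. Verifying that this tie-breaking is consistent across the sweep of data items is the only part needing care; the remainder is the routine term-by-term bound established above.
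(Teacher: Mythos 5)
Your proof is correct in substance, and it rests on the same one-line arithmetic fact that drives the paper's proof --- assigning a previously unassigned item buys $+1$ from the penalty term $\mathcal{C}$ while costing at most $\lambda \le 1$ through the disclosure term --- but you deploy that fact along a genuinely different route. The paper argues phase by phase: for the construction phase it classifies each iteration into three paths (no assignment; assignment to an already-covered item; assignment to an uncovered item), shows the first path is never taken while $\mathcal{C}>0$, and then needs a pigeonhole count over the $t\,|D|$ iterations ($n_2 \le (t-1)|D|$, hence $n_3 = |D|$) to conclude that construction ends with $\mathcal{C}=0$; its local-search argument then only shows that an existing solitary assignment is never removed, i.e., local search merely \emph{preserves} feasibility that construction already produced. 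You instead prove one marginal inequality, $G(M\cup\{\langle d,a\rangle\})-G(M)\ge 1-\lambda$, and use the local-search sweep as a \emph{backstop}: each item is visited exactly once, its neighborhood contains every single extension when it is unassigned, and no other item's processing can alter its assignments afterwards, so any item left unassigned gets repaired no matter what construction did. This buys a shorter proof that avoids the counting argument entirely and is robust to the one detail you state inaccurately (construction can also halt because the $t\,|D|$ iteration cap is reached, not only because no candidate strictly improves; in that case pigeonhole again forces $\mathcal{C}=0$, and in any case your backstop covers it). What the paper's route buys is the stronger intermediate claim that construction alone already satisfies the constraints, rather than relying on the single local-search pass. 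On the boundary case $\lambda=1$: you are right that the bound degrades to equality and a tie-breaking convention toward assignment is then required; note that the paper has exactly the same hole --- it derives only the weak inequalities $G_3 \ge G_1$ and $u+\lambda\tau_I-1 \le u+\lambda(\tau_I-f)$, yet asserts the strict conclusions that path 1 ``will never be taken'' and that a solitary assignment ``will never'' be removed --- so your explicit treatment of ties is, if anything, more careful than the original.
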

\begin{proof}
Both the construction and the local-search phases ensure that no adversary is assigned to more than $t$ adversaries (Ln.~7 in Algorithms~\ref{algo:grasp_construction} and \ref{algo:grasp_localsearch}).

We need to show that every data item is assigned to at least one adversary, i.e., at the end of the algorithm $\mathcal{C}$ will be equal to 0. We will focus on the global Algorithm~\ref{algo:grasp}. The proof for the local version of the algorithm is analogous.

The main intuition behind the proof is the following:
\squishlist
\item At the end of the construction phase, if $\mathcal{C} > 0$, then some data item must be assigned to $>t$ adversaries.
\item In the local-search phase, making a data item unassigned never results in a better objective function.
\squishend

At the start of the construction phase, $\mathcal{C} = |D|$. We show that if at some iteration $\mathcal{C} = i$, then in that iteration some unassigned data item is assigned to an adversary and $\mathcal{C}$ reduces by 1.
If $\mathcal{C} = i > 0$, there are three possible paths the algorithm can follow after the iteration is over: (1) no new assignment is chosen, (2) the algorithm chooses to assign a data entry to an adversary so that the number of violated constraints remains the same (i.e. that data entry is already assigned to at least one adversary), and (3) a data-to-adversary assignment is chosen so that $\mathcal{C} = i - 1$. We will show that given the objective function we are using, the first path will never be chosen.

We evaluate the value of the objective function for paths 1 and  3. For the third path we consider the worst case scenario, where only disclosure is increased (by $\Delta_f$) and utility remains the same. The value of the objective function for paths 1 and 3 is as follows:
\begin{equation*}
\begin{aligned}
&G_{1} = u + \lambda(\tau_I - f) - \mathcal{C} \\
&G_{3} = u + \lambda(\tau_I - f - \Delta_f) - (\mathcal{C} - 1)\\
\end{aligned}
\end{equation*}
Since $\Delta_f \leq 1$, $G_3 \geq G_1$, thus, path 1 will never be taken if $\mathcal{C} > 0$. 

Notice, that the objective values for paths 2 and 3 cannot be directly compared. However, we will show that during the $t |D|$ iterations we perform path 3 will be chosen $|D|$ times, and, hence $\mathcal{C} = 0$ at the end of Algorithm \ref{algo:grasp_construction}. Let $n_1$, $n_2$ and $n_3$ denote the number of times path1, path 2 and path 3 are chosen respectively. By construction we have that $n_1 + n_2 + n_3 = t |D|$. After $t |D|$ iterations if $\mathcal{C} = 0$ we are done. If $\mathcal{C} > 0$ then based on the above statement we will have $n_1 = 0$, and thus $n_2 + n_3 = t \times |D|$. Now we will show that $n_3$ will always be equal to $|D|$. Let $n_3 < |D|$, we have that $n_2 = t  D - n_3 > (t-1)|D|$. Recall that when path 2 is chosen, the algorithm assigned to an adversary a data entry is already assigned to another adversary. Therefore, if $n_2 > (t-1)|D|$, then some data item is assigned to $>t$ adversaries, which does not happen. Therefore, $n_3 = |D|$.

To prove that $\mathcal{C} = 0$ at the end of the local-search phase, it suffices to show that the algorithm will never choose to remove an assignment from a data item assigned to exactly one adversary. Consider the state of the local-search algorithm right before an iteration of its main for-loop (Ln.~3-13 of Algorithm \ref{algo:grasp_localsearch}). We assume that the algorithm considers a data-entry $d$ with a single assignment and that $\mathcal{C} = 0$. Moreover, let $u$ and $f$ be the utility and disclosure at that point. Let $u^\prime$ and $f^{\prime}$ denote the utility and disclosure if $d$ is assigned to no adversary. Consider the best-case scenario where no utility is lost, i.e., $u^{\prime} = u$, and $f^{\prime} = 0$. Note that $\mathcal{C} = 1$. The new objective value will be $u^{\prime} +\lambda(\tau_I -  f^{\prime})  - \mathcal{C} = u +\lambda \tau_I - 1$. We have that $u +\lambda \tau_I - 1 \leq u + \lambda \tau_I - f$, since $f \leq 1$ always holds.
\end{proof}

\section{Experiments}
\label{sec:experiments}
In this section we present an empirical evaluation of \sparsi. The main questions we seek to address are:
(1) how the two versions of the privacy-aware partitioning problem -- \tradeoff and \discbudget -- compare with each other, and how well they exploit the presence of multiple adversaries with respect to disclosure and utility, (2) how the different algorithms perform in optimizing the overall utility and disclosure, and (3) how practical \sparsi is for distributing real-world datasets across multiple adversaries. 

We empirically study these questions using both real and synthetic datasets. After describing the data and the experimental methodology, we present results that demonstrate the effectiveness of our framework on partitioning sensitive data. The evaluation is performed on an Intel(R) Core(TM) i5 2.3 GHz/64bit/8GB machine. \sparsi is implemented in MATLAB and uses MOSEK, a commercial optimization toolbox.

\textbf{Real-World Dataset:} For the first set of experiments we present how \sparsi can be applied to real world domains. We considered a social networking scenario as discussed in Example \ref{ex:socialnetwork}. We used the {\em Brighkite} dataset published by Cho et al.~\cite{cho:2011}. This dataset was extracted from Brightkite, a former location-based social networking service provider where users shared their locations by checking-in. Each check-in entry contains information about the id of the user, the timestamp and location of the check-in. The dataset contains the public check-in data of users and the friendship network of users. The original dataset contains $4.5$ million check-ins, $58,228$ users and $214,078$ edges. We subsampled the dataset and extracted a dataset comprised of $365,907$ check-ins. The corresponding friendship network contains $3,266$ nodes and $2,935$ edges. In Section \ref{sec:real_data}, we discuss how we modeled the utility and information disclosure for this data.

\textbf{Synthetic Data:} For the second set of experiments we used synthetically generated data to understand the properties of different disclosure functions and the performance of the proposed algorithms better. There are two data-related components in our framework. The first is a hypergraph that describes the interaction between data entries and sensitive properties (see Section \ref{sec:dependencygraph}), and the second is a set of weights $w_{da}$ representing the utility received when data entry $d \in D$ is published to adversary $a \in A$.
The synthetic data are generated as follows. First,  we set the total number of data entries $|D| \in \{50, 100, 200, 300, 500\}$, the total number of sensitive properties $|P| \in \{5,10,50, 100\}$, and the total number of adversaries $|A| \in \{2,3,5,7,10\}$.

Next, we describe the scheme we used to generate the utility weights $w_{da}$.
There are two particular properties that need to be satisfied by such a scheme. The first one is that assigning any entry to an adversary should induce some minimum utility, since it allows us to fulfil the task under consideration (see Example \ref{ex:crowdsourcing}). The second one is that there are cases where certain data items should induce higher utilities when assigned to specific adversaries, e.g., some workers may have better accuracy than others in crowdsourcing, or advertisers may pay more for certain types of data. 

The utility weights need to satisfy the aforementioned properties. To achieve this, we first choose a minimum utility value $u_{\min}$ from a unifrom distribution $\mathcal{U}(0,0.1)$. Then, we iterate over all possible data-to-adversary assignments and set the corresponding weight $w_{da}$ to a value drawn from a uniform distribution $\mathcal{U}(0.8,1)$ with probability $p_{u}$, or to $u_{\min}$ with probability $1-p_u$. For our experiments we set the probability $p_u$ to 0.4. Notice that both properties are satisfied. Finally, weights are scaled down by dividing them with the number of adversaries $|A|$. 

Next, we describe how we generate a random hypergraph $H=(X,E)$, with $|X| = |D|$ and $|E| = |P|$, describing the interaction between data entries and sensitive properties. To create $H$ we simply generate an equivalent bipartite dependency graph $G$ (see Section \ref{sec:dependencygraph}) and convert that to the equivalent dependecy hypergraph. In particular we iterate over the possible data to sensitive property pairs and insert the corresponding edge to $G$ with probability $p_{f}$. For our experiments we set $p_{f}$ to 0.3.

\textbf{Algorithms:} We evaluated the following algorithms:
\squishlist
\item RAND$+$: Each data entry is assigned to exactly $t$ adversaries. The probability of assigning a data entry to an adversary is proportional to the corresponding utility weight $w_{da}$. We run the random partitioning a 100 times, and select the data-to-adversary assignment that maximizes the objective function.
\item LP: We solve the LP relaxation of the optimization problems for step (Section \ref{sec:step_functions}) and linear (Section \ref{sec:linear_incr_functions}) disclosure functions. We generate an integral solution from the resulting fractional solution using naive randomized rounding (see Section \ref{sec:step_functions}). Note that the constraints are satisfied in expectations. Moreover, we perform a second pass over the derived integral solution to guarantee that the cardinality constrains are satisfied. If a data item is not assigned to an adversary, we assign it to the adversary with the highest weight, i.e., corresponding fractional solution. On the other hand is a data item is assigned to more adversaries, we remove those with the lowest weight. This is a naive, yet effective, rounding scheme because the fractional solutions we get are close to the integral ones. More sophisticated rounding techniques can be used~\cite{doerr:2010}. We run the rounding 100 times and select the data-to-adversary assignment with maximum value of objective.
\item ILP: We solve the exact ILP algorithm for step and linear disclosure functions.
\item GREEDY: Algorithm \ref{algo:grasp} with GREEDY strategy for picking a candidate
\item GRASP: Algorithm \ref{algo:grasp} with GRASP strategy for picking the candidate assignments using $n=5$ and $r=10$.
\item GREEDYL: Local myopic variant of Algorithm 1 (see Section \ref{sec:extensions} with GREEDY strategy for picking a candidate).
\item GRASPL: Local myopic variant of Algorithm 1 (see Section \ref{sec:extensions} with GRASP strategy for picking candidates) using $n=\min(k,3)$ and $r=10$.
\squishend

\textbf{Evaluation.} To evaluate the performance of the aforementioned algorithms we used the following metrics: (1) the total utility $u$ corresponding to the final assignment, (2) the information disclosure $f$ for the final assignment and (3) the tradeoff between utility and disclosure, given by $u + \lambda(\tau_I-f)$ . We evaluated the different algorithms using different step and linear information disclosure functions for \tradeoff. 

For all experiments we set $\lambda = 1$ and assume an additive utility function of the form $u_a(\aset_a) = \frac{\sum\nolimits_{d \in D}w_{da}x_{da}}{\sum\nolimits_{d \in D}\sum\nolimits_{{\sf top-t}(A) \in A}w_{da}}$, where $x_{da}$ is an indicator variable that takes value 1 when data entry $d$ is revealed to adversary $a$ and 0 otherwise, and  ${\sf top-t(A)}$ returns the top $t$ adversaries with respect to weights $w_{da}$. Observe that the normalization used corresponds to the maximum total utility a valid data-to-adversary assignment may have, when ignoring disclosure. Using this value ensures that the total utility and the quantity $\tau_I - f$ have the same scale $[0,1]$. For convenience we fixed the upper information disclosure to $\tau_I = 0$. Finally, for $RAND+$ we perform 10 runs and report the average, while for LP we perform the aforementioned rounding procedure 10 times and report the average. The corresponding standard errors are shown as error bars in the plots below.

\subsection{Real Data Experiments}
\label{sec:real_data}
We start by presenting how \sparsi can be applied to real-world applications. In particular, we evaluate the performance of the proposed local-search framework meta-heuristic on generic information disclosure functions using Brightkite. As described in the beginning of the section, this dataset contains the check-in locations of users and their corresponding friendship network. As illustrated in Example \ref{ex:socialnetwork} we desire to distribute the check-in information to advertisers, while minimizing the information we disclose for the structure of the network. We, first, show how \sparsi can be used in this scenario.

\textbf{Utility Weights}. We start by modeling the utility provided when advertisers receive a subset of data entries. As mentioned above each check-in entry contains information about location. We assume a total number of $k$ advertisers, so that each adversary is particularly interested in check-in entries that occurred in a certain geographical area. Given an adversary $a \in A$, we draw $w_{da}$ from a uniform distribution $\mathcal{U}(0.8,1)$ for all entries $d \in D$ that satisfy the location criteria of the adversary, and $w_{da} = 0.1$ otherwise. We simulate this process by performing a random partitioning of the location ids across adversaries. As mentioned above we assume an additive utility function. 

\textbf{Sensitive Properties and Information Disclosure}. The sensitive property in the particular setup is the structure of the social network. More precisely, we require that no information is leaked about the existence of any friendship link among users. It is easy to see that each friendship link is associated with a sensitive property. Now, we examine how check-in entries leak information about the presence of a friendship link. Cho et al.~\cite{cho:2011}  proved that there is a strong correlation between the trajectory similarity and the existence of a friendship link for two users. Computing the trajectory similarity for a pair of users is easy and can be done by computing the cosine similarity of the users given the published set of check-ins. Because of this strong correlation we assume that the information leakage for a sensitive property, i.e., a the link between a pair of users, is equal to the trajectory similarity.

More precisely, let $D_a \subset D$ be the check-in data published to adversary $a \in A$. Let $U$ denote the set of users referenced in $D_a$. Given a sensitive property $p =  e(u_i, u_j),~u_i,u_j \in U,i \neq j$ we have that the information disclosure for $p$ is:
\begin{equation}
f(D_a)[p] = \mathsf{CosineSimilarity}(D_a(u_i),D_a(u_j))
\end{equation}
where $D_a(u_i)$ and $D_a(u_j)$ denote the set of check-in data for users $u_i$ and $u_j$ respectively. We aggregate the given check-ins based on their unique pairs of users and locations, and we extract $15,661$ data entries that contain the user information, the location and the number of times that user visited that particular location. Cosine similarity is computed over these counts new data entries. 

\textbf{Results.} As mentioned above, we aim to minimize the information leaked about any edge in the network. We model this requirement by considering the average case information leakage and setting $\tau_I = 0$. In particular, we would like that no information is leaked at all if possible. Moreover, to partition the dataset we solve the corresponding \tradeoff problem. Since we consider cosine similarity we are limited to using RAND+ and one of the local-search heuristics. In particular, we compare the quality of the solutions for RAND+, GREEDYL and GRASPL. Due to the fact that our implementation of GREEDY and GRASP is single-threaded, these algorithms do not scale for this particular task, and thus, are omitted.  However, as illustrated later (see Section \ref{sec:synth_exp}), these myopic algorithms are very efficient in minimizing the information leakage, thus, suitable for our current objective. Again, we run experiments for $|A| \in \{2,3,5,7,10\}$.

First, we point out that under all experiments GREEDYL and GRASPL reported data-to-adversary assignments with zero disclosure. This means that our algorithms were able to distribute the data entries across adversaries so that {\em no information at all} is leaked about the structure of the social network (i.e., friendship links) to any adversary. On the other hand the {\em average } information disclosure for RAND+ ranges from 0.99 {\em almost full disclosure} to 0.1 as the number of adversaries varies from 2 to 10 (see Table \ref{tab:brightKite_disc}). Disclosing the structure of the entire network with probability almost 1, violates our initial specifications, hence, RAND+ fails to solve the problem when the number of adversaries is small. As the number of adversaries increases, the average amount of disclosed information decreases.

We continue our analysis and focus on the utility and the tradeoff objective. The corresponding results are shown in Figure \ref{fig:brightKite}. Figures \ref{fig:brightKite_util} and \ref{fig:brightKite_obj} correspond to the utility and utility-disclosure tradeoff respectively. The corresponding disclosure is shown in Table \ref{tab:brightKite_disc}. As shown, for a small number of adversaries both GREEDYL and GRASPL generate solutions that induce low values for the total utility. This is expected since both algorithms give particular emphasis to minimizing information disclosure due to the tradeoff formulation of the underlying optimization problem. As the number of adversaries increases both algorithms can exploit the structure of the problem better, and offer solutions that induce utility values that are comparable or higher than the ones reported by RAND+. 

However, looking only at utility values can be misleading, as a very high utility value might also incur a high disclosure value. If fact, RAND+ follows this behavior exactly. The high utility data to adversary assignments, when the number of adversaries is small, is associated with almost full disclosure of the structure of the {\em entire} social network. This is captured in Figure \ref{fig:brightKite_obj}, where we see that local-search algorithms clearly outperform RAND+ since no information is disclosed (see Table \ref{tab:brightKite_disc}). As shown in this figure, in most cases, the average objective value for RAND+ is significantly lower than the ones reported by both GREEDYL and GRASPL. Recall that we compute the average over multiple runs of RAND+, where for each run we execute the algorithm multiple times and consider the best solution reported. The large error bars are indicative of the non-robustness of RAND+ for this problem.
\begin{table}[t]
\small \centering
\caption{Average information disclosure reported by RAND+, GREEDYL and GREEDYL for Brightkite. Notice that local-search algorithms generate solutions that reveal no information about the structure of the friendship network. Standard errors are reported in the parenthesis.}
\begin{tabular}{| c || c || c || c || c || c |}
\hline
\multicolumn{6}{|c|}{Avg. Information Disclosure for different values of k} \\
\hline
$Alg.$ & $k$=2 & $k$=3 & $k$=5 & $k$=7 & $k$=10\\
\hline
RAND+ & 0.99(0) & 0.99(0.3) & 0.3(0.4) & 0.1(0.18) & 0.1(0.018) \\ 
GREEDYL & 0 & 0 & 0 & 0 & 0\\
GRASPL & 0 & 0 & 0 & 0 & 0\\
\hline
\end{tabular}
\label{tab:brightKite_disc}
\end{table}

\begin{figure}[t]
\begin{center}
\subfigure{\includegraphics[trim=50 0 62 0,clip,scale=0.45]{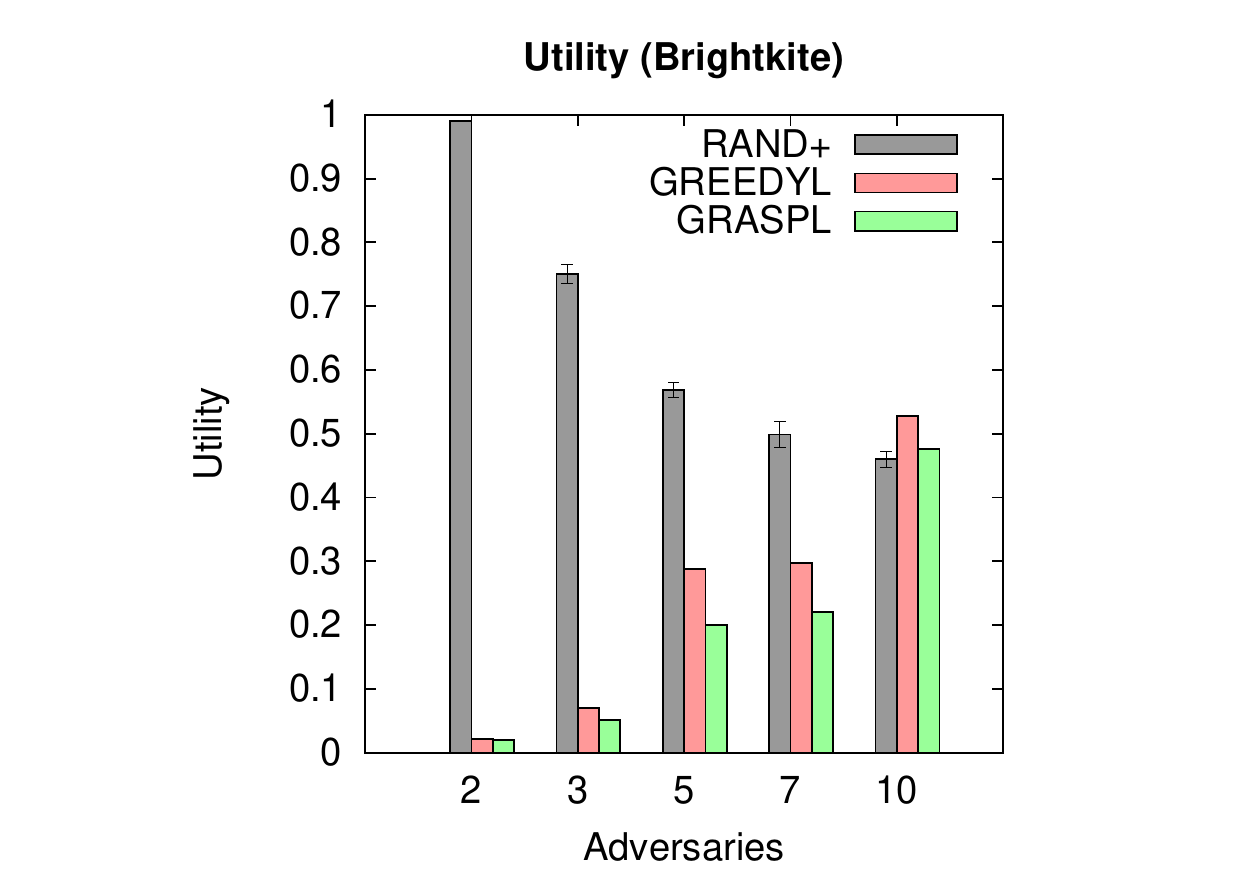} \label{fig:brightKite_util}}
\subfigure{\includegraphics[trim=50 0 62 0,clip,scale=0.45]{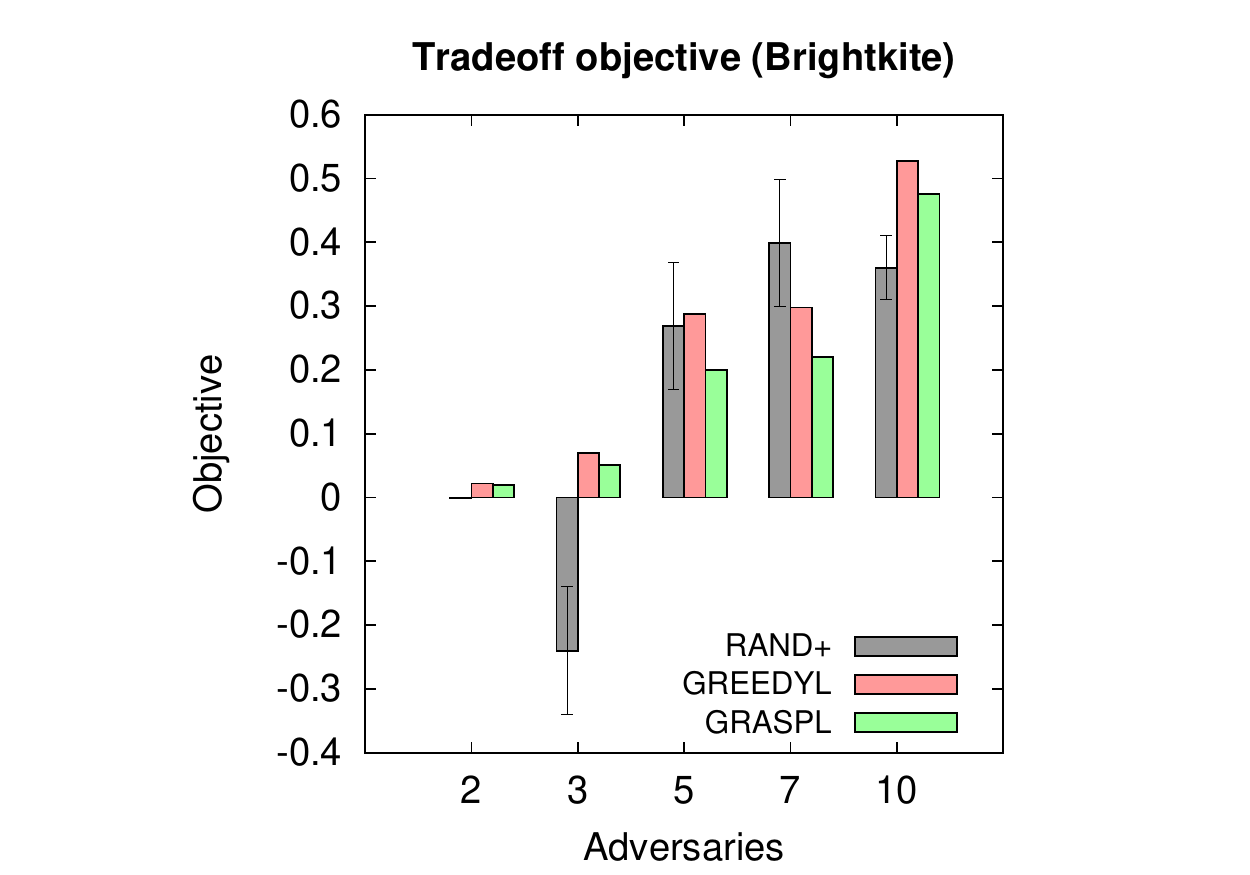} \label{fig:brightKite_obj}}
\end{center}
\caption{Tradeoff objective and utility for  Brightkite. RAND+ generates solutions with high utility but almost full disclosure (see Table \ref{tab:brightKite_disc}). This leads to a poor tradeoff value. GREEDYL and GRASPL disclose no information, and outperform RAND+ with respect to the overall optimization objective.}
\label{fig:brightKite}
\end{figure}

\subsection{Synthetic Data Experiments}
\label{sec:synth_exp}
In this section, we present our results based on the synthetic data. We examined the behavior of the proposed algorithms under several scenarios, where we varied the properties of the dataset to be partitioned, the number of adversaries and the family of disclosure functions considered.

\textbf{Step Functions.} We started by considering step functions. Under this family of disclosure functions, both \discbudget and \tradeoff correspond to the same optimization problem (see Section \ref{sec:step_functions}). Moreover, assuming feasibility of the optimization problem, information disclosure will always be zero. In such cases considering the total utility alone is sufficient to compare the performance of the different algorithms. However, when information is disclosed, comparing the number of fully disclosed properties allows us to evaluate the performance of the different algorithms. 

First, we fixed the number of data entries in the dataset to be $|D| = 500$ and considered values of $|P|$ in $\{5, 10, 50, 100\}$. Figure \ref{fig:step_fixed_data} shows the utility derived by the data-to-adversary assignment corresponding to different algorithms for $|P| = 50$. As depicted, all algorithms that exploit the structure of the dependency graph while solving the underlying optimization problem (i.e., LP, GREEDY, GREEDYL, GRASP and GRASPL) outperform RAND$+$. In most cases, LP, GREEDYL, GREEDY and GRASP where able to find the optimal solution that ILP reported. The high performance of the LP algorithm is justified by the fact that the fractional solution reported was in fact an integral solution. 

GRASPL found solutions with non-optimal utilities, which are still better than RAND$+$. We conjecture that this performance decrease is due to randomization. This is more obvious if we contrast the performance of GRASPL with GRASP. We can see that randomization leads to worse solutions (with respect to utility) when keeping a myopic view on the given optimization problem. On the other hand randomization is helpful in the case of non-myopic local-search. Recall that the reported numbers correspond to no information disclosure, while missing values correspond to full disclosure of at least one sensitive property. As depicted GREEDY failed to find a valid solution when splitting the data across 2 adversaries. However, when randomization was used, GRASP was able to find the optimal solution. Similar performance for different values of $|D|$. These results are omitted due to space constraints.

Next, we ran a second set of experiments to investigate how the performance of the proposed heuristics with respect to the amount of disclosed information. Recall that all local-search algorithms do not explicitly check for infeasible values of disclosure for step functions, since they optimize the tradeoff between utility and disclosure. Therefore, they might report infeasible solutions. We fixed the number of sensitive properties to $|P| = 50$, and the number of adversaries to $k = 2$, and we varied the number of data items $|D|$. We considered $|D| \in \{100, 200, 300, 500\}$. We observed the same behavior as in the previous experiment, i.e., all local-search heuristics failed to report feasible solutions.


\begin{figure}[h]
\begin{center}
\includegraphics[trim=5 0 10 0,clip,scale=0.5]{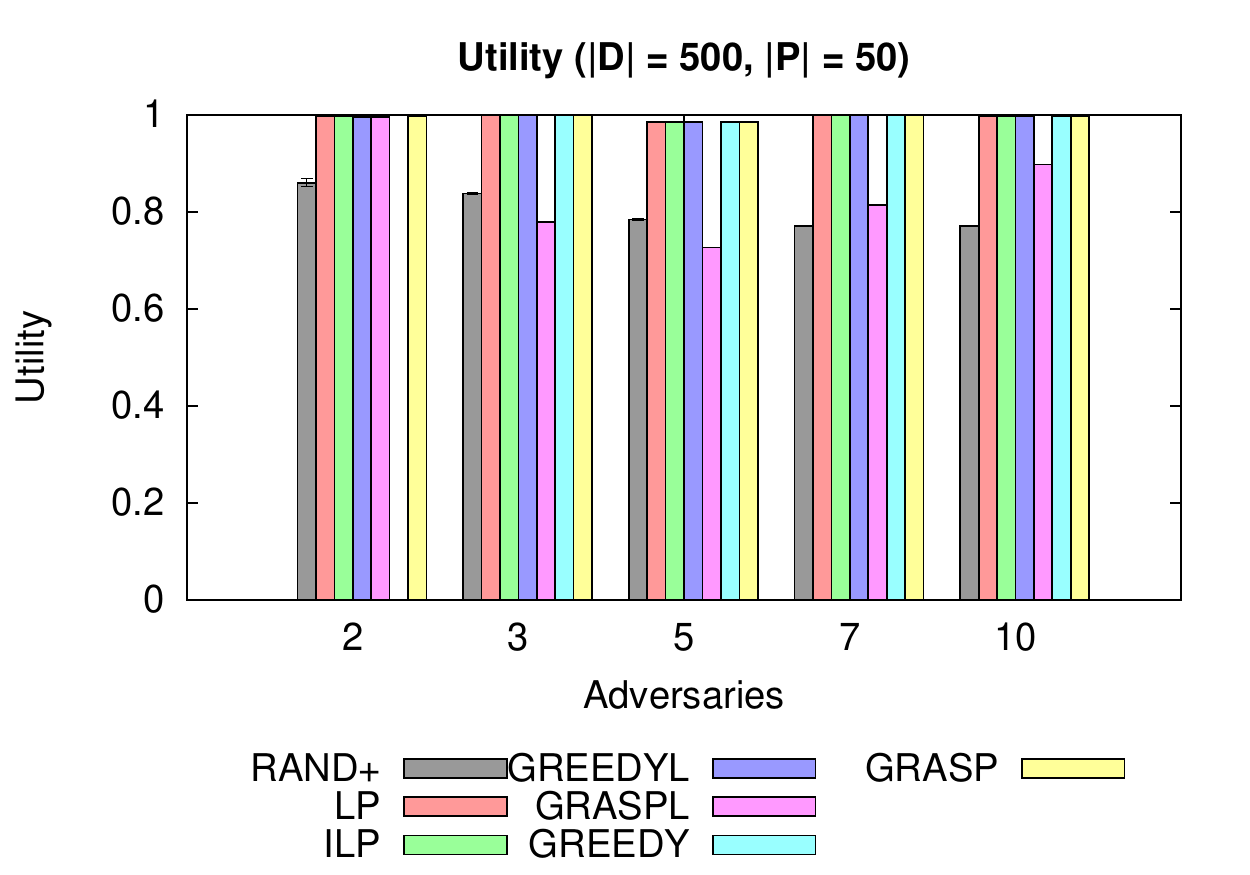} 
\end{center}
\caption{Utility for step disclosure functions. All reported numbers correspond to no information disclosure, while missing values to full disclosure.}
\label{fig:step_fixed_data}
\end{figure}

The corresponding results are presented in Table \ref{tab:step_fixed_prop}. As the number of data entries was increasing for a fixed number of sensitive properties, the number of fully disclosed properties started decreasing. This behavior is expected as the number of data entries per single property increases, and hence, it is easier for our heuristics to find a partitioning where the data items for a single property are partitioned across adversaries inducing zero disclosure.

\begin{table}[h]
\small \centering
\caption{Fully disclosed properties for data-to-adversary assignments for step disclosure functions. As the number of data entries per property increases, local-search heuristics exploit the structure of the problem better and report solutions with fewer fully disclosed properties. }
\begin{tabular}{| c || c || c || c || c |}
\hline
\multicolumn{5}{|c|}{Number of fully disclosed properties ($|P|$ = 50, k = 2)} \\
\hline
$Alg.$ & $|D|$=100 & $|D|$=200 & $|D|$=300 & $|D|$=500\\
\hline
GREEDYL & 11 & 6 & 3 & 0\\
GRASPL & 11& 6 & 4 & 0\\
GREEDY & 11 & 6 & 3 & 1\\
GRASP & 11 & 6 & 5 & 0\\
\hline
\end{tabular}
\label{tab:step_fixed_prop}
\end{table}



The experiments above show that GREEDY and GRASP are viable alternatives for the case of step functions. However, for harder instances of the problem, where the number of sensitive properties is large and the number of adversaries is small, solving the LP relaxation offers a more robust and reliable alternative.

\textbf{Linear Functions.} We continue our discussion and present our experimental results for linear functions. First, we compared the quality of solutions produced when solving \discbudget and \tradeoff optimally for both worst and average disclosure. We generated a synthetic instance of the problem by setting $|D| = 50$ and $|P| = 10$, and we run ILP for $|A| = \{2,3,5,7,10\}$. We set the maximum allowed disclosure to $\tau_I = 0.9$ and $t = 2$.

\begin{figure}[b]
\begin{center}
\subfigure{\includegraphics[trim=50 0 60 0,clip,scale=0.46]{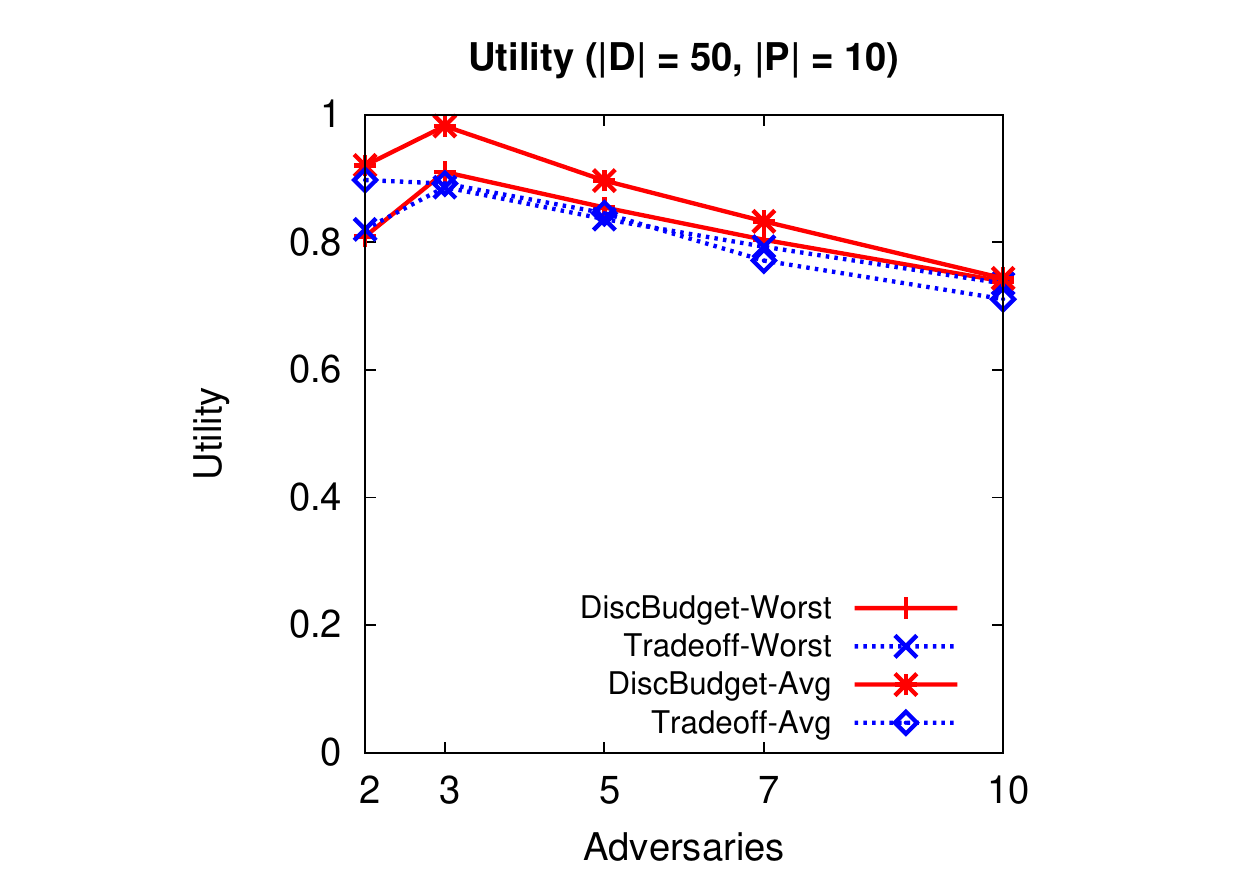} \label{fig:budget_trade_util}}
\subfigure{\includegraphics[trim=50 0 57 0,clip,scale=0.46]{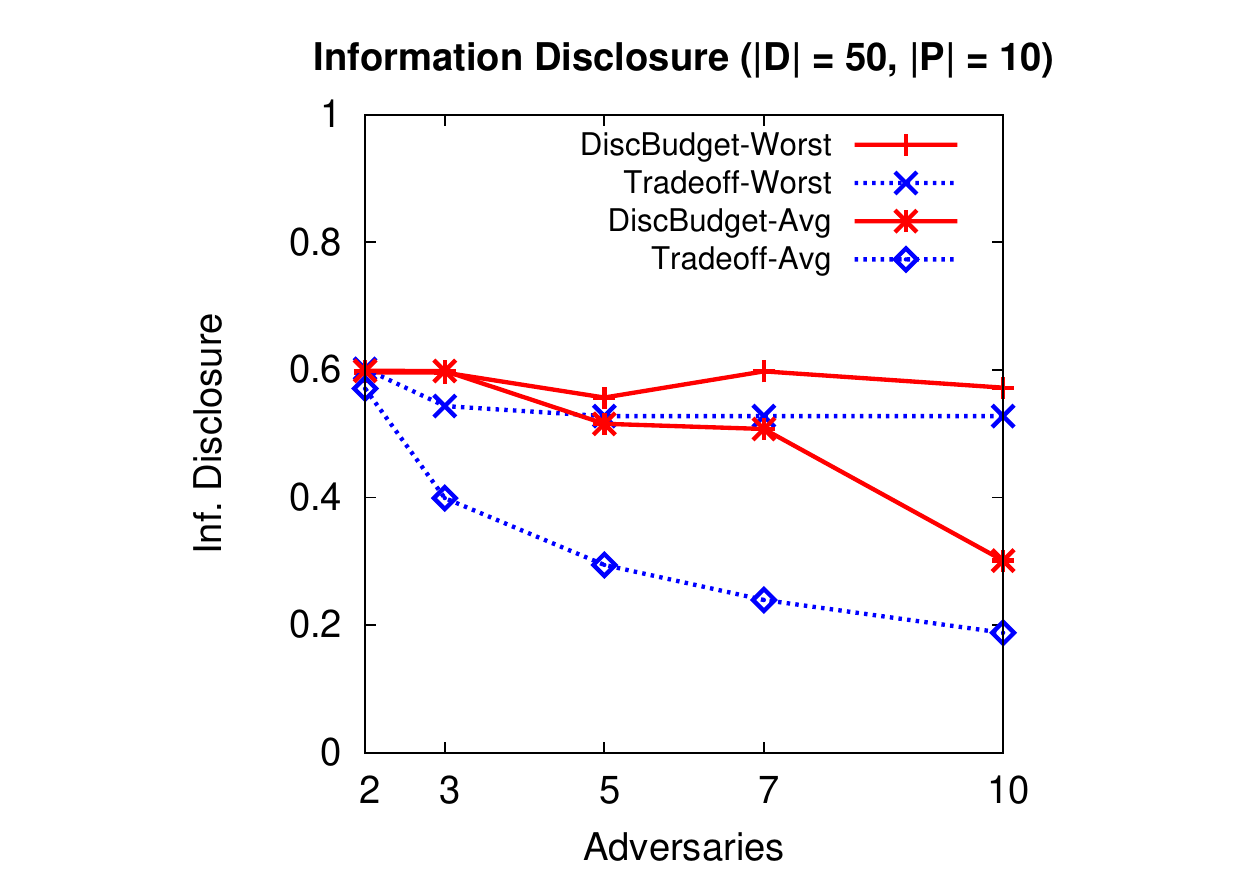} \label{fig:budget_trade_disc}}
\end{center}
\caption{The (a) utility and (b) disclosure when solving \discbudget and \tradeoff optimally. \tradeoff can exploit the presence of multiple adversaries better, to reduce disclosure while maintaining high utility.}
\label{fig:budget_vs_trade}
\end{figure}

\begin{figure*}
\begin{center}
\subfigure{\includegraphics[trim=55 0 62 0,clip,scale=0.5]{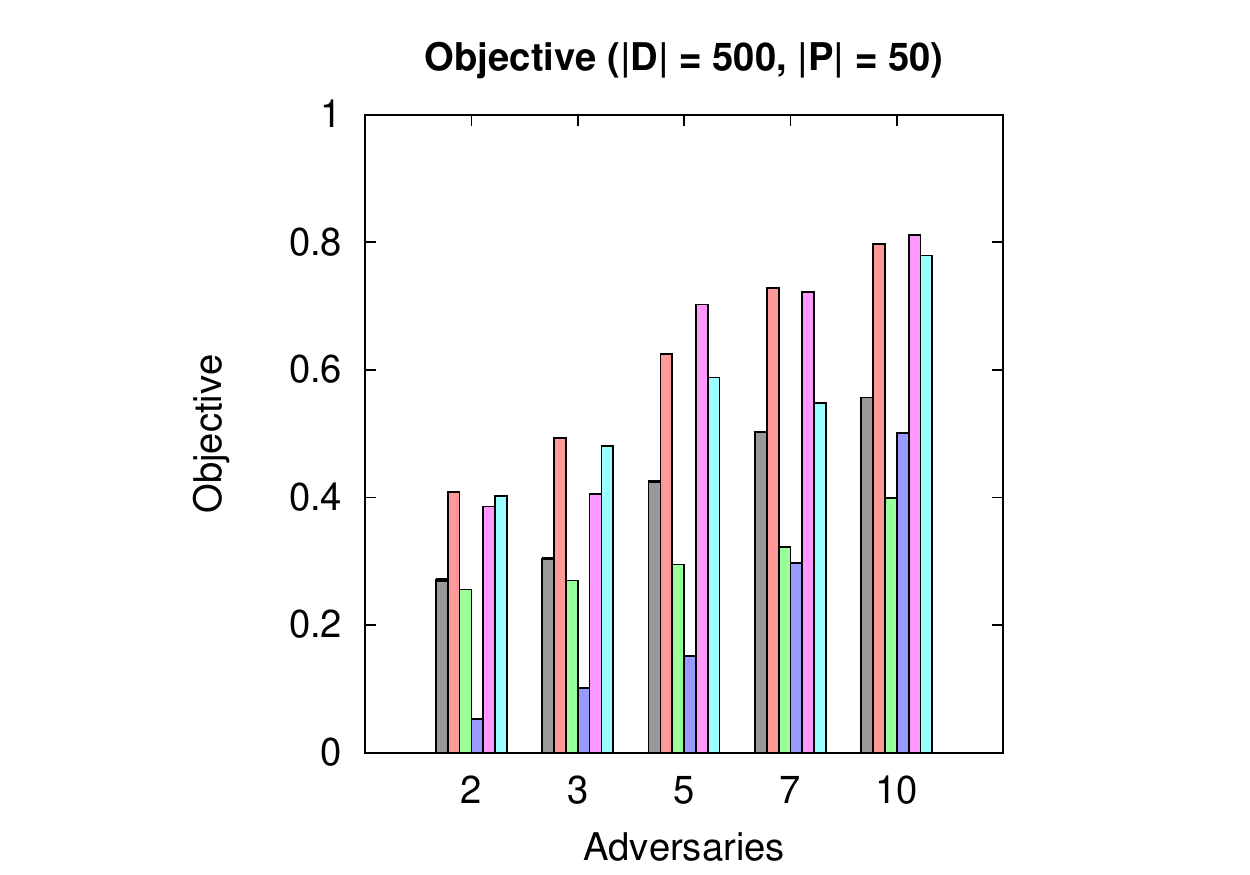} \label{fig:linear_prop_worst_obj}}
\subfigure{\includegraphics[trim=56 0 62 0,clip,scale=0.5]{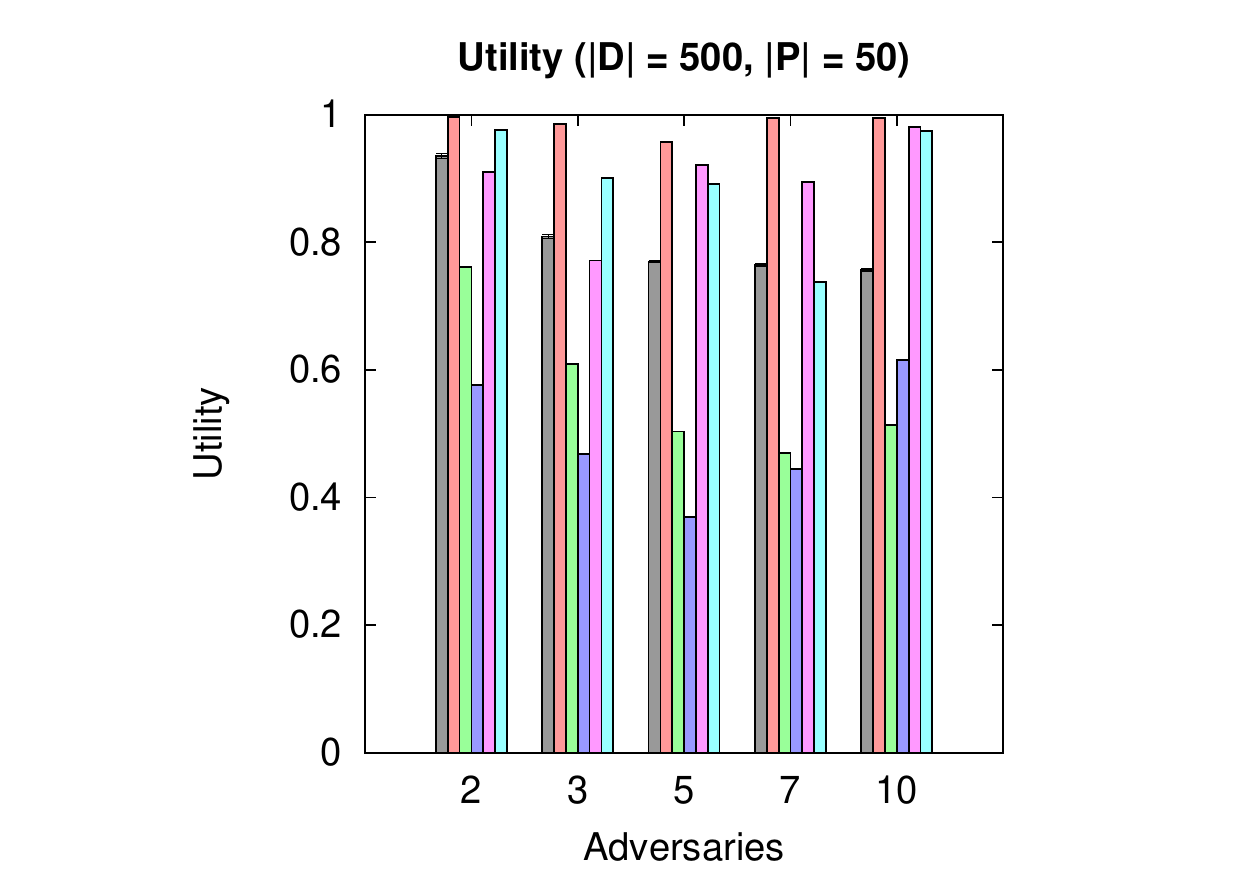} \label{fig:linear_prop_avg_util}}
\subfigure{\includegraphics[trim=10 0 10 0,clip,scale=0.5]{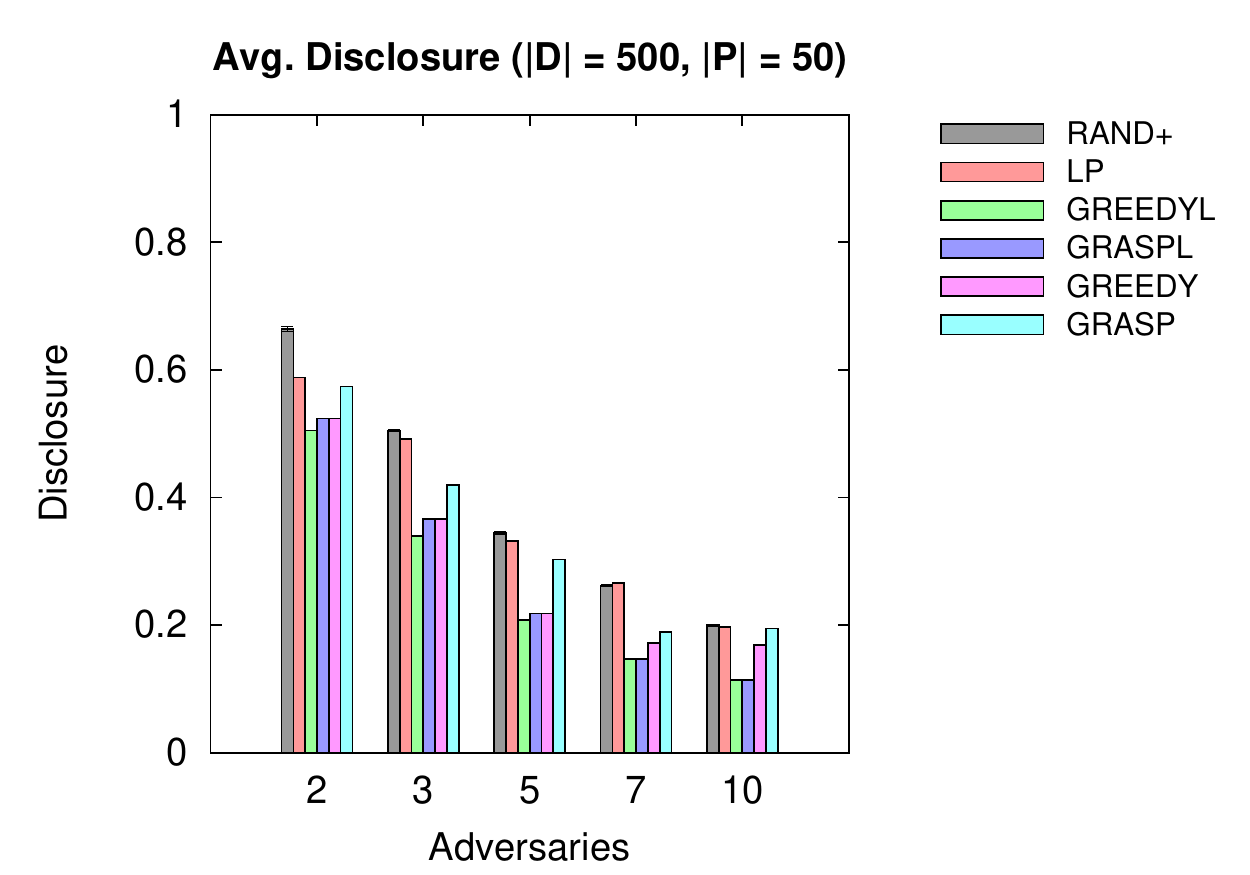} \label{fig:linear_prop_avg_disc}}
\end{center}
\caption{Tradeoff objective, utility and disclosure for linear functions considering average disclosure ($|D| = 500$, $|P| = 50$).  LP, GREEDY and GRASP outperform RAND+ both in terms of total utility and average disclosure. LP maximizes utility, while the local-search heuristics are the most effective in minimizing disclosure.}
\label{fig:linear_prop_worst_avg}
\end{figure*}

The utility and corresponding disclosure are shown in Figure \ref{fig:budget_vs_trade}  when \discbudget and \tradeoff are solved optimally. As shown, the worst case disclosure remains at the same levels across different number of adversaries. Now, consider the case of average disclosure (see Equation \ref{eq:total_disclosure}). We see that for both versions of the optimization problem the disclosure is decreasing as the number of adversaries increases. However, the optimization corresponding to \tradeoff is able to exploit the presence of multiple adversaries better, to reduce disclosure while maintaining utility high. 

%
%

Consequently, we evaluated RAND+, LP, GREEDYL, GRASPL, GREEDY and GRASP on solving \tradeoff . We set upper disclosure $\tau_I = 0$ denoting our requirement to minimize disclosure as much as possible. We do not report any results for the ILP since for $|D| > 50$, it was not able to terminate in reasonable time. First, we fixed the number of properties to $|P| = 50$ and considered instances with $|D| = \{100,200,300,500\}$. We considered average disclosure. We show the performance of the algorithms for $|D| = 500$ in Figure \ref{fig:linear_prop_worst_avg}. As shown, LP, GREEDY and GRASP outperform RAND+ both in terms of the overall utility and the average disclosure. In fact RAND+ performs poorly as it returns solutions with higher disclosure but lower utility than the LP. Furthermore, we see that the performance gap between the local-search algorithms and RAND+ keeps increasing as the number of adversaries increases. This is expected as the proposed heuristics take into account the structure of the underlying dependency graph, and, hence, can exploit the presence of multiple adversaries to achieve a higher overall utility and lower disclosure.

Furthermore, we see that solutions derived using the LP approximation provide the largest utility, while solutions derived using the proposed local-search algorithms minimize the disclosure. As presented in Figure \ref{fig:linear_prop_worst_avg} using the myopic construction returns solutions with low overall utility. This behavior is expected since the algorithm does not maintain a global view of the data-to-adversary assignment. Observe that randomization improves the quality of the solution with respect to total utility, when the global-view construction is used (see GREEDY and GRASP). When the myopic construction is used, randomization gives lower quality solutions.

Measuring the average disclosure is an indicator for the overall performance of the proposed algorithms. However, it does not provide us with detailed feedback about the information disclosure across properties for the different algorithms. To understand what is the exact information disclosure for the solutions returned by the different algorithms, we measured the total number of properties that exceeded a particular disclosure level.  We present the corresponding plots for $|D| = 500$, $|P| = 50$, $k = 2$ and $k = 7$ in Figure \ref{fig:disc_levels}. As shown, the proposed search-algorithms can exploit the presence of multiple adversaries very effectively in order to minimize disclosure. If we compare Figures \ref{fig:disc_level_2} and \ref{fig:disc_level_2} we see that the total number of properties reaches zero for a significantly smaller disclosure threshold in the presence of ten adversaries.
\begin{figure}[h]
\begin{center}
\subfigure{\includegraphics[trim=56 0 62 0,clip,scale=0.45]{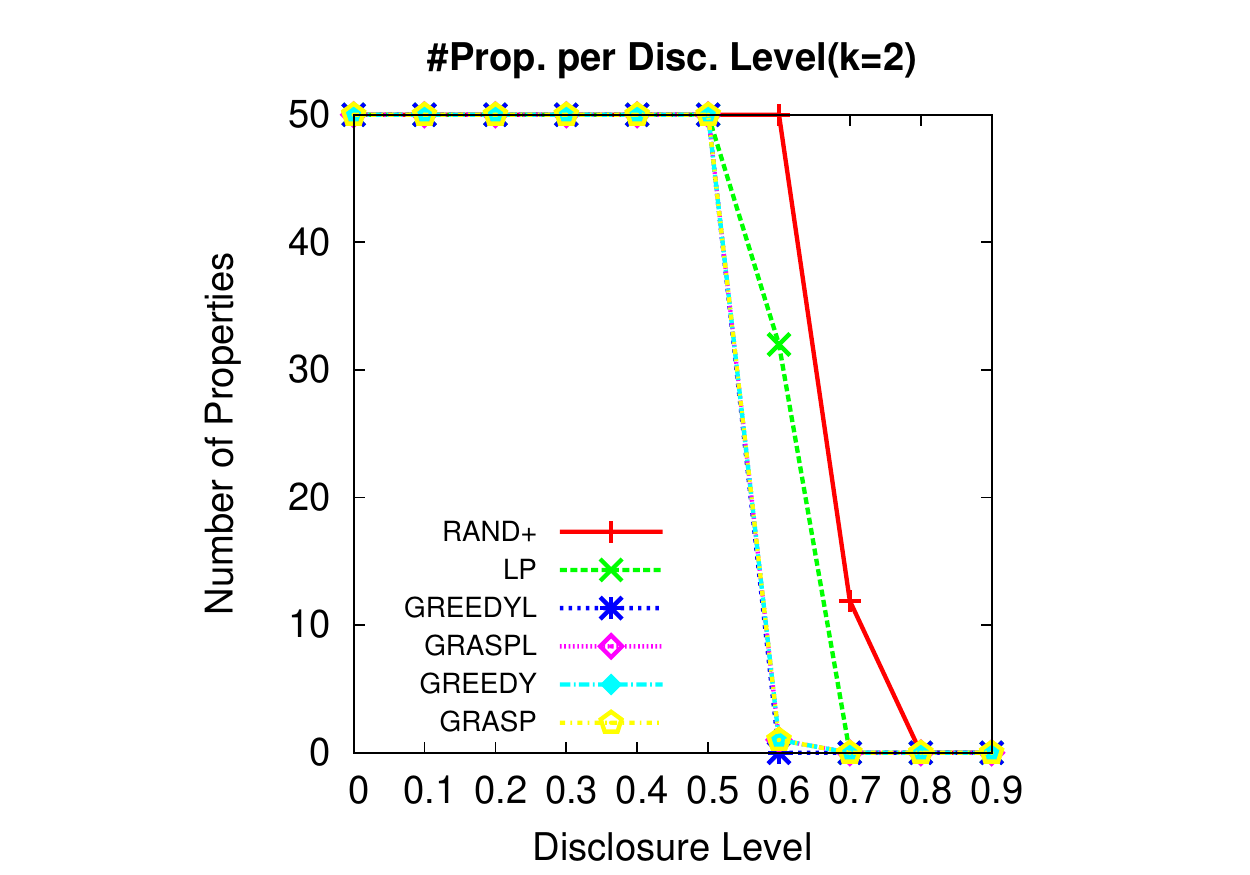} \label{fig:disc_level_2}}
\subfigure{\includegraphics[trim=56 0 62 0,clip,scale=0.45]{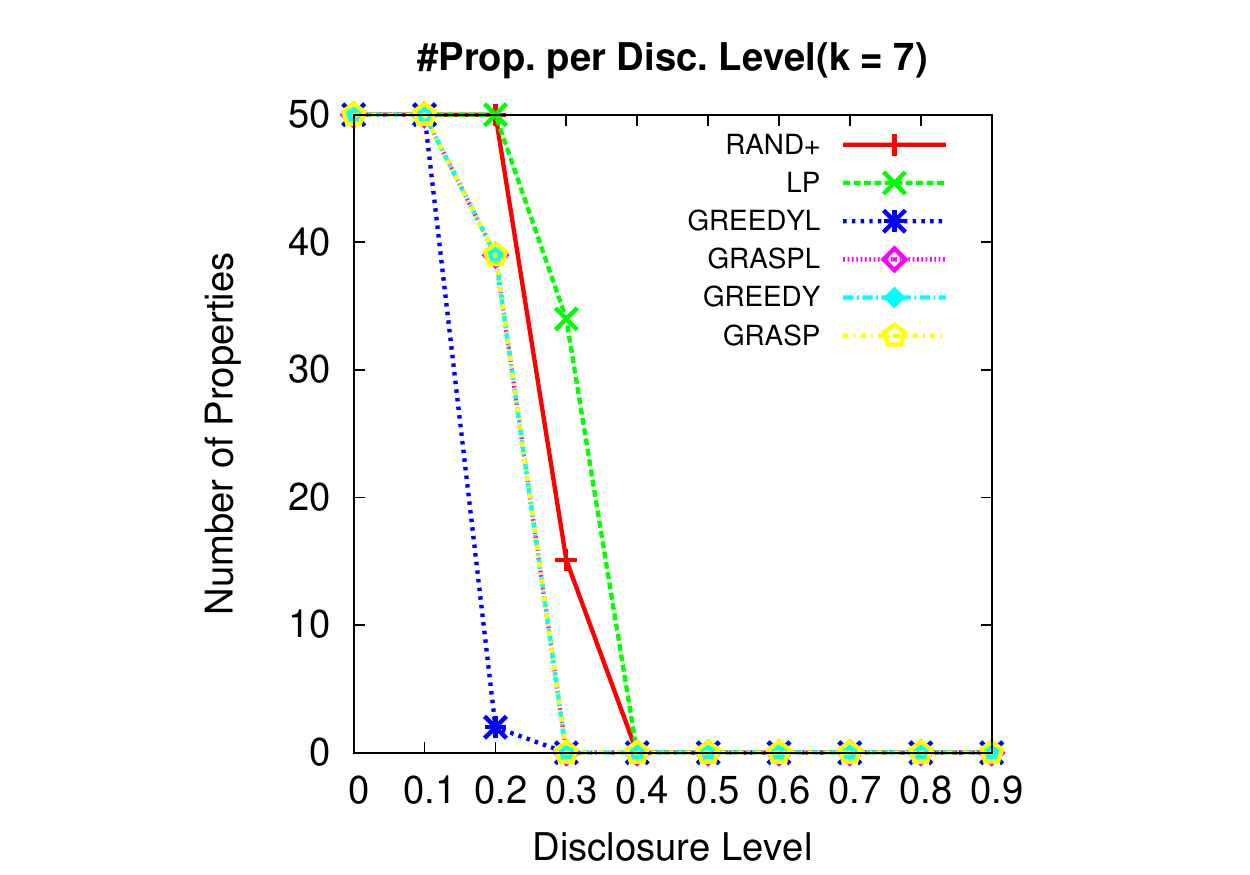} \label{fig:disc_level_7}}
\end{center}
\caption{The number of properties that exceed a particular disclosure level for $|D| = 500$ and $|P| = 50$. GREEDYL and GRASPL can exploit the presence of multiple adversaries more effectively to minimize disclosure. The total number of properties reaches zero for a significantly smaller disclosure threshold in  the presence of ten adversaries.}
\label{fig:disc_levels}
\end{figure}

\section{Related Work}

There has been much work on the problem of publishing (or allowing aggregate queries over) sensitive datasets (see surveys \cite{chen09:fnt,dworksurvey}). Here, information disclosure is characterized by a privacy definition, which is either syntactic constraints on the output dataset (e.g., $k$-anonymity \cite{sweeney02:kAnon} or $\ell$-diversity \cite{ashwin06:ldiversity}), or constraints on the publishing or query answering algorithm (e.g., $\epsilon$-differential privacy \cite{dworksurvey}). Each privacy definition is associated with a privacy level ($k$, $\ell$, $\epsilon$, etc.) that represents a bound on the information disclosure. Typical algorithmic techniques for data publishing or query answering, which include generalization, or coarsening of values, suppression, output perturbation, and sampling, attempt to maximize the utility of the published data given some level of privacy (i.e., a bound on the disclosure). Krause et al. \cite{krause:2008} consider the problem of trading-off utility for disclosure, and consider general submodular utility and supermodular disclosure functions. This paper formulates a submodular optimization problem, and presents efficient algorithm for the same. However, all the above techniques assume that all the data is published to a single adversary. Even when multiple parties may ask different queries, prior work makes a worst-case assumption that they arbitrarily collude. On the other hand, in this paper, we formulate the novel problem of multiple non-colluding adversary, and develop near-optimal algorithms for trading-off utility for information disclosure in this setting.



%
%
%
\section{Conclusions and Future Work}
More and more sensitive information is released on the Web and processed by online services, naturally raising concerns related to privacy in domains where detailed and fine-grained information must be published. In this paper, motivated by applications like online advertising and crowd-sourcing markets, we introduce the problem of privacy-aware $k$-way data partitioning, namely, the problem of splitting a sensitive dataset among $k$ untrusted parties. We present \sparsi a theoretical framework that allows us to formally define the problem as an optimization of the tradeoff between the utility derived by publishing the data and the maximum information disclosure incurred to any single adversary. Moreover, we prove that solving it is NP-hard by reducing it to hypergraph partitioning. We present a performance analysis of different approximation algorithms for a variety of synthetic and real-world datasets, and demonstrate how \sparsi can be applied in the domain of online advertising. Our algorithms are able to partition user-location data to multiple advertisers while ensuring that almost no sensitive information about potential friendship links about these users can be inferred by any advertiser.

Our research so far has raised several interesting research directions. To our knowledge, this is the first work that leverages the presence of multiple adversaries to
minimize the disclosure of private information while maximizing utility. While we provided worst case guarantees for several families of disclosure functions, an interesting future direction is to examine if rigorous guarantees can be provided for other widely-used information disclosure functions like information gain, or if the current ones can be improved. Finally, it is of particular interest to consider how the proposed framework can be extended to consider interactive scenarios where data are published to adversaries more than once, or in streaming data where the partitioning must be done in an online manner.

\bibliographystyle{abbrv}
\bibliography{vldb_secure_partitioning}
\end{document}